\documentclass[11pt]{article}

\usepackage[margin=1in]{geometry}
\usepackage{amsmath,amssymb,amsthm}
\usepackage{bm}
\usepackage{graphicx}
\usepackage{natbib}
\usepackage{algorithm}
\usepackage{algpseudocode}
\usepackage{float}
\usepackage{caption}
\usepackage[hidelinks]{hyperref}

\graphicspath{{pic/}}

\theoremstyle{plain}
\newtheorem{theorem}{Theorem}
\newtheorem{proposition}{Proposition}
\newtheorem{lemma}{Lemma}
\newtheorem{corollary}{Corollary}
\theoremstyle{remark}
\newtheorem{example}{Example}

\theoremstyle{definition}

\begin{document}

\title{Manifold Regression}

\author{Yuanyuan Lin\thanks{Corresponding author. Email: \href{mailto:linyuanyuan@amss.ac.cn}{linyuanyuan@amss.ac.cn}.} ~and Dennis K.J. Lin \\[0.5em]
Department of Statistics, Purdue University, West Lafayette, IN 47907, USA}

\date{\today}

\maketitle

\begin{abstract}
  Conventional statistical modeling typically assumes a one-to-one mapping between input and output variables, where ``one-to-one'' is used in the predictive sense that a specified input results in a unique output.
  Many scientific and engineering systems, however, exhibit non-one-to-one (noto) input-output relations.
  In a noto relation, the same input may correspond to multiple admissible outputs, so the conventional statistical models may not be appropriate.
  This paper develops \emph{manifold regression}, a parametric modeling framework for such noto input-output relations, which represents the underlying input-output relation through a latent manifold.
  The manifold is specified upto an unknown finite dimensional parameter and we estimate the unknown parameters from ordinary input-output observations by regularized profile optimization with a robust solver.
  Once the manifold has been learned, prediction is obtained by slicing the estimated manifold along a specified coordinate value.
  Because a slice may contain multiple latent roots, the resulting \emph{manifold prediction} is naturally set-valued.
  This formulation extends predictive learning beyond conventional one-to-one statistical models while containing classical regression and inverse prediction as special cases.
  Theoretical results are established to justify the manifold regression model and its sliced prediction sets;
  and case studies are used to demonstrate stable performance across representative noto cases.
\end{abstract}

\medskip
\noindent\textbf{Keywords:} Inverse problems, Latent space, Manifold learning, Non-one-to-one, Set-valued prediction

\section{Introduction}
\label{sec:intro}

Conventional statistical modeling is shown to be powerful for one-to-one mappings between inputs $\bm{x}$ and outputs $\bm{y}$.
Namely, the identical input $\bm{x}$ results in the same output $\bm{y}$.
Meanwhile, non-one-to-one (noto) input-output relations, where a specified input value $\bm{x}$ may correspond to multiple output values, also arise frequently in scientific and engineering systems.
Consequently, the conventional conditional mean $\mathbb E(\bm{Y}|\bm{X}=\bm{x})$ may not be well defined as a representation of the underlying predictive relation.
Consider the data plotted in Figure~\ref{fig:spiral_intro} as an illustration (the original data are displayed in the Supplementary Material).
Although the scatter exhibits a clear one-dimensional structure, conventional predictive modeling treats the multiple admissible outputs associated with the same input as error-driven variation.
The resulting fitted model averages across these outputs and produces a point-valued prediction, which may not correspond to any meaningful state of the underlying system.

\begin{figure}[!b]
  \centering
  \includegraphics[width=0.38\textwidth]{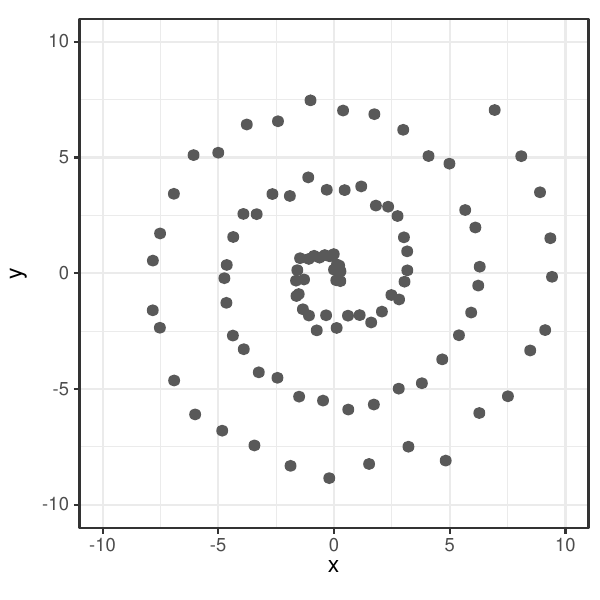}
  \caption{Raw data exhibiting a noto relation between input $x$ and output $y$.}
  \label{fig:spiral_intro}
\end{figure}

Figure~\ref{fig:spiral_intro} makes this noto relation visible, but the issue is much broader.
Hysteresis systems provide a typical physical example: the same input level can produce different outputs depending on the latent history of the system \citep{mayergoyz2003mathematical}.
In robotic kinematics, one end-effector position may be achieved by multiple valid joint configurations, such as elbow-up and elbow-down solutions \citep{spong2005robot}.
Similar ambiguity also appears in modern conditional image generation, where a single conditioning input, such as a partially observed image or a text description, may admit multiple visually distinct but equally valid images \citep{zhu2017toward,rombach2022high}.
These examples share a common statistical feature: the object of interest is not a point-valued prediction, but a set of admissible solutions.

This work shifts the formulation from direct input-output mapping to a fundamentally new perspective.
The key idea is that a noto relation in the input-output coordinates can be represented by a manifold, on which both the inputs and outputs are well-defined mappings of the latent coordinates.
This work develops this idea into a unified framework, called \emph{manifold regression},
where we consider the parametric setting in which the latent mappings have specified functional forms indexed by an unknown finite dimensional parameters.
For each observation, the input is used to infer the feasible 
latent coordinates that match the observed input; among these 
input-matching candidates, the coordinate that best explains the 
observed output is profiled out.
The manifold shape parameters and the active latent domain are then estimated by regularized profile optimization, where the regularization term prevents manifold overfitting. 
When the underlying functional form is unknown, nonparametric estimation with unspecified latent mappings is currently under investigation.
Once this manifold is estimated, prediction becomes a downstream problem, and we slice the manifold at a specified input value and map every retained latent root to its output value.

Manifold regression is geometrically related to manifold learning methods, which recover low-dimensional structure from observations in an ambient space through intrinsic representation or manifold reconstruction
\citep{HastieStuetzle1989,KeglKrzyzakLinderZeger2000,wang2025augmentation,dunson2026inferring}.
However, the two frameworks have very different statistical focuses and methodologies.
Manifold learning generally treats the observed coordinates as components of a common ambient vector, and the learned geometry supports dimension reduction, denoising, and visualization.
Manifold regression focuses on recovering a relation manifold while preserving the distinct roles of input and output throughout the modeling.
Moreover, our manifold regression allows us to tackle the set-valued prediction problem in a principled way.

This work contributes a new statistical formulation of predictive learning. 
At the level of problem formulation, the predictive target is extended beyond the one-to-one case to set-valued prediction. 
At the methodological level, a manifold perspective is introduced into regression, but it preserves the statistical roles of input and output.
Most importantly, the proposed framework broadens the statistical scope of predictive learning while retaining classical regression as a special case. 
The accompanying theory (as will be shown) is thus stated at the level of the objects targeted by the framework, including consistency of manifold modeling and Hausdorff consistency of the resulting prediction sets.

The rest of the paper is organized as follows. 
Section~\ref{sec:illustration} uses the spiral example to motivate the manifold view. 
Section~\ref{sec:modeling} introduces the problem formulation and proposes manifold modeling. 
Relevant theoretical supports are also established.
Section~\ref{sec:prediction} introduces manifold prediction by slicing the estimated manifold, 
Section~\ref{sec:results} presents representative noto case studies, 
and Section~\ref{sec:conclusion} concludes with broader implications and future directions.

\section{An Illustrative Example}
\label{sec:illustration}

This section continues to use the (Archimedean) spiral mentioned in Section~\ref{sec:intro} to motivate the manifold perspective, and demonstrate how this geometric viewpoint naturally unfolds a challenging noto predictive learning problem.

Conventional statistical models represent the relation between input $x$ and output $y$ by a mapping \(y=f(x)+\epsilon\), where $f$ is an unknown function and $\epsilon$ is observation noise. 
Figure~\ref{fig:poly-spline}A and B plot the regression and spline fits against the spiral samples as displayed in Figure~\ref{fig:spiral_intro}. 
Solid, dashed, and dotted lines depict the fitted models, while the points represent the data. 
These plots illustrate why this functional modeling assumption fails for noto relations. 
The first-, second-, and third-degree polynomial model fittings, as displayed in Figure~\ref{fig:poly-spline}A, can only compromise among multiple admissible output values, tending to pass through the empty center of the data rather than follow the observed geometric structure.
The spline fits, including interpolating and smoothing splines, as displayed in Figure~\ref{fig:poly-spline}B, offer much greater flexibility, but flexibility alone does not resolve the problem.
Specifically, interpolating splines oscillate erratically between distant points, while smoothing splines simply average across disjoint vertical branches. 
By contrast, the proposed manifold fit in Figure~\ref{fig:poly-spline}C, as will be discussed, smoothly traces the underlying spiral structure and captures the observed data pattern.

\begin{figure}[tb]
  \centering
  \includegraphics[width=1\textwidth]{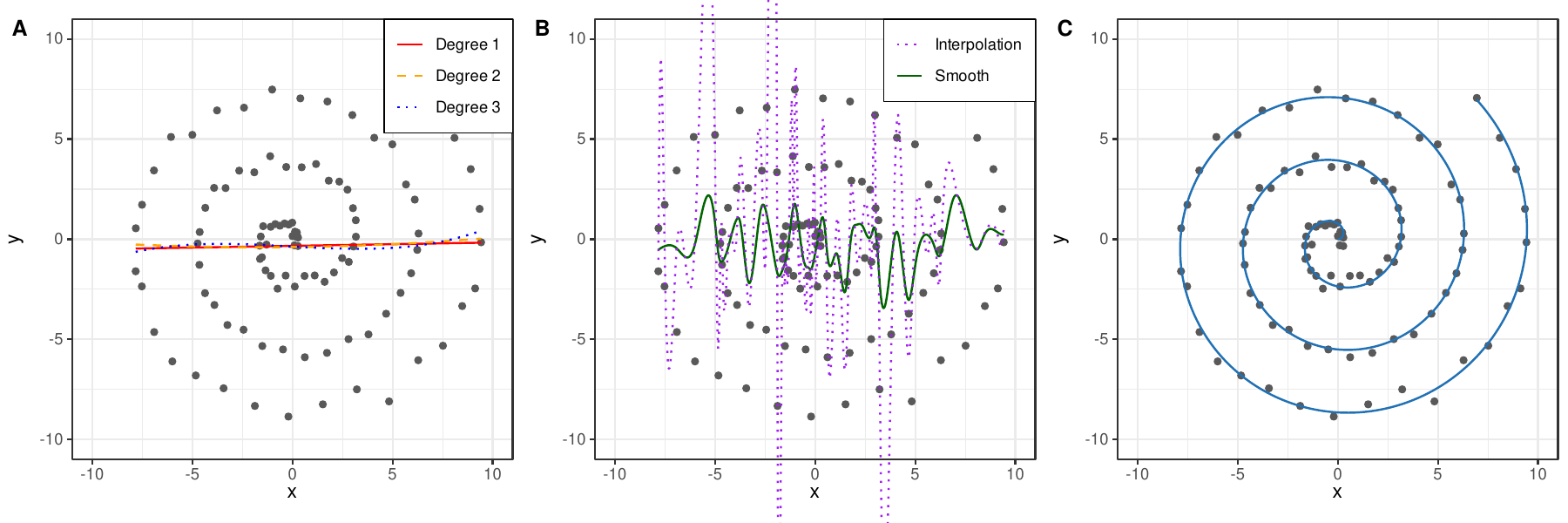}
  \caption{Different modeling approaches on spiral data. A: polynomial regressions of degrees 1 (red, solid), 2 (orange, dashed), and 3 (blue, dotted). B: smoothing spline (green, solid) and interpolating spline (purple, dotted). C: manifold fit.}
  \label{fig:poly-spline}
\end{figure}

The spiral case suggests a different statistical object. 
Viewed comprehensively across the domains of input $x$ and output $y$, the observations lie near a low-dimensional manifold. 
This intuition can be formalized through the latent parametrization of the spiral. 
A continuous latent variable $t$ indexes position along the curve, allowing both the input and output to be modeled as functions of that intrinsic latent state,
\begin{equation}
\label{eq:spiral}
\begin{cases}
x=\alpha t\cos(t),\\
y=\alpha t\sin(t)+\epsilon_y,
\end{cases}
\end{equation}
where $\epsilon_y\sim\mathcal N(0,\sigma^2)$ is output noise and $\alpha>0$ is an unknown scale parameter. 
Under this manifold perspective, the input and output variables are decoupled from each other and instead conditionally depend on the common underlying state $t$. 

This generative structure clarifies the mechanism behind the noto pattern.
Distinct latent roots $t_a\neq t_b$ can produce the exact same input projection $\alpha t_a \cos(t_a)=\alpha t_b \cos(t_b)$ with distinct outputs $\alpha t_a \sin(t_a)\neq \alpha t_b \sin(t_b)$. 
In this view, prediction for each input should be the full set of admissible outputs induced by the manifold.

The immediate advantage of adopting this latent representation is that recovering $t$ converts the problem into two ordinary curve fitting tasks. 
For the spiral, the radial relation \(r_i=\sqrt{x_i^2+y_i^2}\approx \alpha t_i\) permits a specialized phase decoupling strategy. 
Sorting observations by their radius and sequentially unwrapping the corresponding phase yields $\hat t_i=\arctan(y_i/x_i)$. If $\hat t_{i+1}<\hat t_i$, set $\hat t_{i+1}=\hat t_i+2\pi$.
Finally, $\hat\alpha_{LS}$ is estimated via least squares from the nearly linear relation between $r_i$ and $\hat t_i$, namely $\hat\alpha_{LS}=\sum_i r_i \hat{t}_i/\sum_i \hat{t}_i^2$.
When mapped back to the input and output domains, the recovered curves smoothly trace the underlying spiral manifold, as shown in Figure~\ref{fig:poly-spline}C.
Figure~\ref{fig:latent-curves} shows the corresponding latent-space representation. 
Solid and dashed lines in black, green, blue, and red depict the true and estimated coordinate functions; gray points denote observations. The plot demonstrates that the tangled spiral is expressed through two smooth coordinate functions once the latent variable and manifold structure are recovered.

\begin{figure}[tb]
  \centering
  \includegraphics[width=1\textwidth]{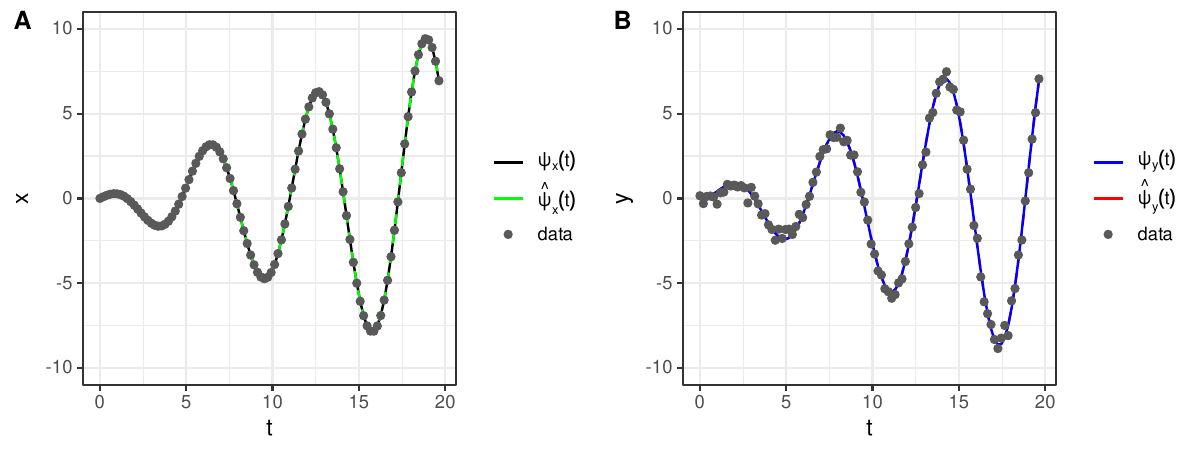}
  \caption{Latent coordinate functions for the spiral. A: $x=\alpha t\cos(t)$ (black, solid) and estimate $\hat{x}=\hat{\alpha}\hat{t}\cos(\hat{t})$ (green, dashed). B: $y=\alpha t\sin(t)$ (blue, solid) and estimate $\hat{y}=\hat{\alpha}\hat{t}\sin(\hat{t})$ (red, dashed). Gray points denote observations.}
  \label{fig:latent-curves}
\end{figure}

The general lesson from this exercise extends well beyond the spiral: when the observed relation is noto, modeling can proceed through latent geometric structure rather than direct regression in the observed space. 
The phase decoupling solver above is specific to polar coordinates. 
While the example is valuable for intuition, it is not sufficient for general noto systems. 
This motivates the need for a systematic manifold modeling approach, developed in Section~\ref{sec:modeling}.

\section{Manifold Regression Model}
\label{sec:modeling}


The manifold modeling objective is to estimate the manifold structure from observations. 
This section first formulates the objective as a regularized profile optimization problem and then presents a robust solver for the objective.

\subsection{Regularized Profile Optimization}
\label{subsec:rpo}

Let $\mathcal{X} \subset \mathbb{R}$ denote the input space and $\mathcal{Y} \subset \mathbb{R}$ denote the output space.
The denoised data are assumed to lie on a one-dimensional manifold whose latent coordinates map respectively into $\mathcal{X}$ and $\mathcal{Y}$.
Specifically, the system is governed by a latent coordinate $t\in \mathbb R$ and shape parameters $\bm\theta$ through a smooth generative map $\mathbf{\Psi}(t;\bm\theta)=[\psi_x(t;\bm\theta)^\top,\psi_y(t;\bm\theta)^\top]^\top$ to the spaces $\mathcal{X}$ and $\mathcal{Y}$.
Unless otherwise stated, the observations are assumed to be independent and identically distributed.
The population sampling model is
$
X_i=\psi_x(T_i;\bm\theta^\star),~
Y_i=\psi_y(T_i;\bm\theta^\star)+\varepsilon_{y,i},
$
where $(T_i,\varepsilon_{y,i})$ are i.i.d., $T_i$ has support $[0,\eta^\star]$ for some $0<\eta^\star<\infty$, and $\varepsilon_{y,i}\sim\mathcal{N}(0,\sigma^2)$ is independent of $T_i$.
Here $\sigma^2>0$ is unknown and $\bm\theta^\star\in\Theta$ denotes the true shape parameters.
The observed pairs $(x_i,y_i)$ are realizations of $(X_i,Y_i)$.
The input coordinate is observed without measurement error, while only the output component is corrupted by additive noise.
Moreover, $\mathbf{\Psi}(\cdot;\bm\theta)$ is assumed to be injective on the relevant latent domain, so that each point on the noiseless manifold has a unique latent coordinate.

Each observed input $x_i$ defines the constraint $\psi_x(t_i;\bm\theta)=x_i$ on the latent variable $t_i$. 
However, without domain restriction, the set
$\{t_{i,j}\in\mathbb{R}:~\psi_x(t_{i,j};\bm\theta)=x_i\}$
can contain infinitely many points, as in the spiral case, so the search cannot be completed in finite time.
Thus, the active domain volume $\eta$ is introduced to truncate the latent domain to $[0,\eta]$.
Let $\mathcal{T}_{\bm\theta,\eta}(x_i)=\{t_{i,j}\in[0,\eta]:\psi_x(t_{i,j};\bm\theta)=x_i\}$
denote the feasible set of latent coordinates corresponding to input $x_i$ restricted to the latent domain $[0,\eta]$.
If $\mathcal{T}_{\bm\theta,\eta}(x_i)\neq\varnothing$, let $\hat{t}_i(\bm\theta,\eta)=\arg\min_{t_{i,j}\in\mathcal{T}_{\bm\theta,\eta}(x_i)}\lVert y_i-\psi_y(t_{i,j};\bm\theta)\rVert^2$ denote the selected latent estimate that best models the observed output $y_i$ from $\mathcal{T}_{\bm\theta,\eta}(x_i)$, and
$
m_{\bm\theta,\eta}(x,y)
=
\inf_{t\in\mathcal{T}_{\bm\theta,\eta}(x)}
\|y-\psi_y(t;\bm\theta)\|^2
$
denotes the corresponding modeling error for a single observation $(x,y)$ given $\bm\theta$ and $\eta$.
If $\mathcal{T}_{\bm\theta,\eta}(x)=\varnothing$, the convention $m_{\bm\theta,\eta}(x,y)=+\infty$ is adopted.

The sample residual error for $\{(x_i,y_i)\}_{i=1}^n$ is then defined as the average modeling error over all observations,
\begin{equation}
  \label{eq:profile-loss}
  R_n(\bm\theta,\eta)={1}/{n}\sum_{i=1}^n m_{\bm\theta,\eta}(x_i,y_i).
\end{equation}
Since the latent coordinates are profiled out through the selected estimates $\{\hat{t}_i(\bm\theta,\eta)\}_{i=1}^n$, the objective $R_n(\bm\theta,\eta)$ depends only on the shape parameters $\bm\theta$ and the domain volume $\eta$.
Motivated by the profile likelihood idea of eliminating nuisance parameters before optimizing shape parameters \citep{murphy2000profile}, the criterion $\min_{\bm\theta,\eta} R_n(\bm\theta,\eta)$ is referred to as \emph{profile optimization} (PO).

The profile optimization can drift toward overly complex manifolds, however.
Figure~\ref{fig:alpha-grid} shows three representative solutions under different values of the domain volume $\eta$. The panels display the fitted manifold curves under different domain volumes. 
Large $\eta$ in Figure~\ref{fig:alpha-grid}A yields many dense spiral turns; 
while small $\eta$ in Figure~\ref{fig:alpha-grid}C yields too few turns and fails to cover the observed structure. 
The middle case in Figure~\ref{fig:alpha-grid}B provides a more plausible balance. 
The behavior in Figure~\ref{fig:alpha-grid}A is characterized as \emph{manifold overfitting}: by inflating $\eta$, the model wraps around the plane repeatedly to reduce residual error, but introduces spurious revolutions unsupported by the data-generating mechanism.

\begin{figure}[tb]
  \centering
  \includegraphics[width=1\textwidth]{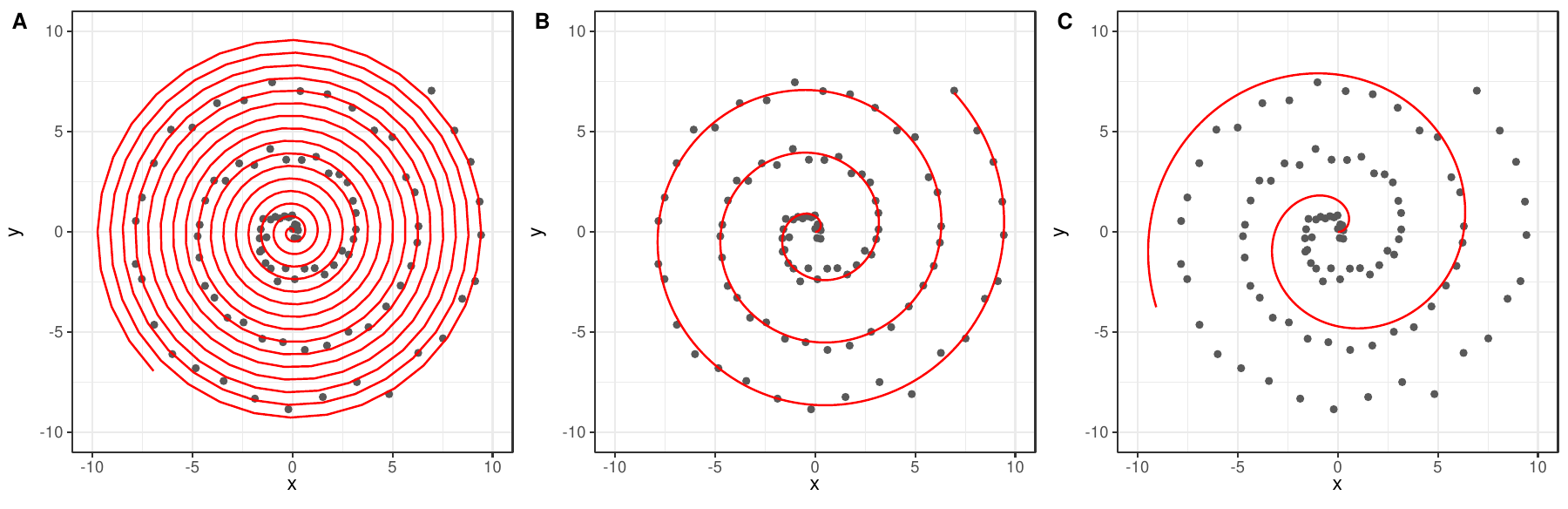}
  \caption{Effect of domain volume $\eta$ under PO on spiral data. A: large $\eta$ ($\eta\approx98$, $\alpha=0.1$) produces dense spurious revolutions. B: intermediate $\eta$ ($\eta\approx19.6$, $\alpha=0.5$) matches the true spiral. C: small $\eta$ ($\eta\approx9.8$, $\alpha=1.0$) fails to cover the data. 
  Red solid curves show fitted manifolds; gray points show observations.}
  \label{fig:alpha-grid}
\end{figure}

\begin{proposition}
\label{prop:monotonicity}
(i) For any given $\bm\theta_0\in\Theta$ and any $\eta_1\le \eta_2$, $R_n(\bm\theta_0,\eta_2)\le R_n(\bm\theta_0,\eta_1)$. 
(ii) For any $\eta_1\le \eta_2$, $\inf_{\bm\theta\in\Theta} R_n(\bm\theta,\eta_2) \le \inf_{\bm\theta\in\Theta} R_n(\bm\theta,\eta_1)$.
(iii) If the optimization problem \\
\(\min_{\bm\theta\in\Theta,\ \eta\in[\eta_{\min},\eta_{\max}]}R_n(\bm\theta,\eta)\), 
with \(\eta_{\max}<\infty\), 
admits a global minimizer, then at least one global minimizer 
\((\widetilde{\bm\theta},\widetilde\eta)\) satisfies \(\widetilde\eta=\eta_{\max}\).

\end{proposition}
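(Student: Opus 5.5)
The whole argument rests on one set inclusion. For fixed $\bm\theta$, every input $x$, and $\eta_1\le\eta_2$, the feasible latent sets are nested:
\[
\mathcal{T}_{\bm\theta,\eta_1}(x)=\{t\in[0,\eta_1]:\psi_x(t;\bm\theta)=x\}\subseteq\{t\in[0,\eta_2]:\psi_x(t;\bm\theta)=x\}=\mathcal{T}_{\bm\theta,\eta_2}(x),
\]
because $[0,\eta_1]\subseteq[0,\eta_2]$. An infimum over a larger set is no larger, so $m_{\bm\theta,\eta_2}(x,y)\le m_{\bm\theta,\eta_1}(x,y)$ for every $(x,y)$.

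The empty-set convention needs a separate check. If $\mathcal{T}_{\bm\theta,\eta_1}(x)=\varnothing$, then $m_{\bm\theta,\eta_1}=+\infty$, and the inequality holds trivially. If $\mathcal{T}_{\bm\theta,\eta_1}(x)\neq\varnothing$, then $\mathcal{T}_{\bm\theta,\eta_2}(x)$ is also nonempty, and both quantities are ordinary infima with the larger index set giving the smaller value. This is the extended-real inequality with the $\inf\varnothing=+\infty$ convention; it is the only point needing care.

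For (i), set $\bm\theta=\bm\theta_0$ and average the pointwise inequality over $i=1,\dots,n$. Sums in $[0,+\infty]$ preserve the ordering, which gives $R_n(\bm\theta_0,\eta_2)\le R_n(\bm\theta_0,\eta_1)$.

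For (ii), apply (i) at each $\bm\theta$ to get $\inf_{\bm\theta'}R_n(\bm\theta',\eta_2)\le R_n(\bm\theta,\eta_2)\le R_n(\bm\theta,\eta_1)$ for every $\bm\theta\in\Theta$. Taking the infimum over $\bm\theta$ on the right-hand side gives the claim.

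For (iii), let $(\bm\theta^\dagger,\eta^\dagger)$ be any global minimizer over $\Theta\times[\eta_{\min},\eta_{\max}]$. Since $\eta^\dagger\le\eta_{\max}$, part (i) gives $R_n(\bm\theta^\dagger,\eta_{\max})\le R_n(\bm\theta^\dagger,\eta^\dagger)$. The point $(\bm\theta^\dagger,\eta_{\max})$ is feasible because $\eta_{\max}<\infty$ lies in the interval. Its objective value is therefore at least the global minimum, and by the previous inequality also at most it. Hence $(\widetilde{\bm\theta},\widetilde\eta)=(\bm\theta^\dagger,\eta_{\max})$ is itself a global minimizer, which proves (iii). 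If the minimum value is $+\infty$, every feasible point is a minimizer and the claim is immediate.
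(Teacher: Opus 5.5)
Your proposal is correct and follows the paper's proof essentially step for step. The nesting $\mathcal T_{\bm\theta_0,\eta_1}(x_i)\subseteq\mathcal T_{\bm\theta_0,\eta_2}(x_i)$ with the $\inf\varnothing=+\infty$ convention gives (i), taking infima over $\bm\theta$ gives (ii), and comparing any global minimizer with the feasible point at $\eta_{\max}$ via (i) gives (iii), just as in the paper.
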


Proposition~\ref{prop:monotonicity} formalizes this phenomenon: enlarging the latent domain expands the feasible search region for each observation, so $R_n(\bm\theta,\eta)$ is non-increasing in $\eta$.
Further, on any bounded interval, a globally optimal solution of PO can always be found at the upper boundary. 
This boundary degeneracy of PO motivates an additive regularization term on the active domain volume $\eta$ to prevent manifold overfitting.

To make such a penalty comparable across noise scales, the data-fitting component should be invariant to the unknown output noise variance.
Recall that the output noise is assumed to be $\varepsilon_{y,i}\sim\mathcal{N}(0,\sigma^2)$. 
After profiling out the latent coordinates, the Gaussian log-likelihood is
\begin{align}
  \label{eq:profile-loglike}
\ell(\bm\theta,\eta,\sigma^2)
=-{n}/{2}\log(2\pi\sigma^2)-{1}/{2\sigma^2}\sum_{i=1}^n\left\{y_i-\psi_y(\hat{t}_i(\bm\theta,\eta);\bm\theta)\right\}^2.
\end{align}
For fixed $(\bm\theta,\eta)$, maximizing with respect to $\sigma^2$ gives
$
\hat{\sigma}^2(\bm\theta,\eta)
={1}/{n}\sum_{i=1}^n\{y_i-\psi_y(\hat{t}_i(\bm\theta,\eta);\bm\theta)\}^2.
$
Substituting this estimate back into the likelihood~\eqref{eq:profile-loglike} and neglecting constants independent of $(\bm\theta,\eta)$ yields the negative profiled log-likelihood contribution $n/2\log\left\{R_n(\bm\theta,\eta)\right\}$, where $R_n(\bm\theta,\eta)$ is exactly the profile sample residual error in Equation~\eqref{eq:profile-loss}.
Adding the domain-volume penalty and dividing by the positive constant $n$ gives the normalized regularized profile optimization objective (RPO)
\begin{equation}
  \label{eq:rpo-objective}
J_n(\bm\theta,\eta)
=
\frac{1}{n}\left\{\frac{n}{2}\log R_n(\bm\theta,\eta)+\lambda\eta\right\}
=
\frac12\log R_n(\bm\theta,\eta)
+
\frac{\lambda}{n}\eta,
\end{equation}
where $\lambda>0$ is a fixed regularization parameter; we use $\lambda=1$ throughout the numerical studies.
Note that the first term in \eqref{eq:rpo-objective} corresponds to the data-fitting component and is scale-invariant with respect to the noise variance $\sigma^2$, while the second term penalizes large domain volumes.

The penalty is placed on \(\eta\) because a larger domain volume expands the root search region and may collect remote latent matches that artificially reduce the empirical residual. 
Thus, \(\eta\) should be interpreted as the active domain volume used by the finite-sample fit. 
The manifold shape is represented by \(\bm\theta\), while \(\eta\) only controls the active portion of the fitted manifold used by the finite-sample fit and by downstream slices.

This subsection concludes with asymptotic consistency of the RPO estimator.
Let $R(\bm\theta,\eta)=\mathbb E\{m_{\bm\theta,\eta}(X,Y)\}$ and $J(\bm\theta,\eta)={1}/{2}\log R(\bm\theta,\eta)$.
The feasible latent set \(\mathcal T_{\bm\theta,\eta}(x)\) may be empty for some parameter values, and its root structure may change discontinuously as \((\bm\theta,\eta)\) varies. 
The consistency analysis is therefore stated on an admissible search set. 
A set \(\mathcal A\subseteq\Theta\times[\eta_{\min},\eta_{\max}]\) is called admissible if
$
\mathbb P\bigl\{\mathcal T_{\bm\theta,\eta}(X)\neq\varnothing
\text{ for every }(\bm\theta,\eta)\in\mathcal A\bigr\}=1
$,
and \(m_{\bm\theta,\eta}(x,y)\) is measurable in \((x,y)\) and continuous in \((\bm\theta,\eta)\) on \(\mathcal A\) for almost every \((x,y)\).
Throughout the analysis below, the RPO estimator is defined over an admissible search set \(\mathcal A\).

\begin{lemma}
\label{lem:rpo-uniform-convergence}
Under the i.i.d.\ sampling mechanism, suppose that \(\mathcal A\) is a nonempty compact admissible set and \(\psi_y(t;\bm\theta)\) is continuous on \([0,\eta_{\max}]\times\Theta_{\mathcal A}\), where \(\Theta_{\mathcal A}\) denotes the projection of \(\mathcal A\) onto \(\Theta\), then
$
\sup_{(\bm\theta,\eta)\in\mathcal A}
\bigl|R_n(\bm\theta,\eta)-R(\bm\theta,\eta)\bigr|
\to0
$
almost surely.
\end{lemma}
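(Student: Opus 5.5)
This is a uniform strong law of large numbers for the class $\{m_{\bm\theta,\eta}:(\bm\theta,\eta)\in\mathcal A\}$. The plan is to invoke the classical Glivenko--Cantelli result for parametric families indexed by a compact set. In the form of Jennrich, or Newey--McFadden Lemma 2.4, it requires three things:
(a) the parameter set $\mathcal A$ is compact;
(b) $m_{\bm\theta,\eta}(x,y)$ is continuous in $(\bm\theta,\eta)$ for almost every $(x,y)$ and measurable in $(x,y)$;
(c) there is an integrable envelope $\sup_{(\bm\theta,\eta)\in\mathcal A}|m_{\bm\theta,\eta}(X,Y)|\le M(X,Y)$ with $\mathbb E\,M(X,Y)<\infty$.

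Items (a) and (b) are given directly, by the compactness hypothesis and the definition of an admissible set. So the proof reduces to verifying (c) and then running the standard bracketing argument.

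For the envelope, I use admissibility: almost surely $\mathcal T_{\bm\theta,\eta}(X)\neq\varnothing$ for every $(\bm\theta,\eta)\in\mathcal A$. Hence for each such parameter there is some $t\in[0,\eta]\subseteq[0,\eta_{\max}]$ with
\[
m_{\bm\theta,\eta}(X,Y)\le \{Y-\psi_y(t;\bm\theta)\}^2\le 2Y^2+2C^2 ,
\]
where $C=\sup\{|\psi_y(t;\bm\theta)|:t\in[0,\eta_{\max}],\ \bm\theta\in\Theta_{\mathcal A}\}$. The constant $C$ is finite because $\psi_y$ is continuous on the compact set $[0,\eta_{\max}]\times\Theta_{\mathcal A}$; note that $\Theta_{\mathcal A}$ is compact as the continuous projection of a compact set. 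Since $m\ge0$, I take $M=2Y^2+2C^2$.

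Integrability of $M$ follows from the model. Write $Y=\psi_y(T;\bm\theta^\star)+\varepsilon_y$ with $T\in[0,\eta^\star]$, so that $\psi_y(T;\bm\theta^\star)$ is bounded by continuity on a compact set, and $\varepsilon_y$ is Gaussian. Therefore $\mathbb E Y^2<\infty$.

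One point needs care here. The continuity hypothesis is stated on $[0,\eta_{\max}]\times\Theta_{\mathcal A}$, and $\bm\theta^\star$ may not lie in $\Theta_{\mathcal A}$ with $\eta^\star\le\eta_{\max}$. If so, I would appeal to the standing smoothness assumption on $\mathbf\Psi(\cdot;\bm\theta^\star)$ to get boundedness on $[0,\eta^\star]$. Alternatively, I would note that the bound $Y^2\le 2\psi_y(T;\bm\theta^\star)^2+2\varepsilon_y^2$ only needs $\psi_y(\cdot;\bm\theta^\star)$ continuous on the compact set $[0,\eta^\star]$.

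For the bracketing argument, fix $\epsilon>0$ and a point $a=(\bm\theta,\eta)\in\mathcal A$. Set $u(a,\delta)=\sup_{\|a'-a\|<\delta,\,a'\in\mathcal A}m_{a'}$ and $l(a,\delta)=\inf_{\|a'-a\|<\delta,\,a'\in\mathcal A}m_{a'}$. By a.e. continuity, $u(a,\delta)-l(a,\delta)\to0$ pointwise as $\delta\downarrow0$, and this difference is dominated by $M$. Dominated convergence then gives $\mathbb E\{u(a,\delta)-l(a,\delta)\}<\epsilon$ for some $\delta_a>0$.

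Compactness yields finitely many balls covering $\mathcal A$. On each ball, the ordinary SLLN applied to $u$ and $l$ sandwiches $R_n(a')-R(a')$ within $2\epsilon$ eventually almost surely. Taking a finite union of null sets, and then $\epsilon\downarrow0$ along a countable sequence, completes the proof.

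The main obstacle is measurability of $u$ and $l$, since they are suprema over uncountable sets. I would handle this by taking the supremum over a countable dense subset of $\mathcal A$ intersected with the ball, which is legitimate because of the a.e. continuity in $(\bm\theta,\eta)$. This gives measurability without any outer-measure machinery.
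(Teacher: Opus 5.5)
Your proposal is correct and follows essentially the paper's argument: the envelope $2Y^2+2M^2$ obtained from admissibility and compactness, then dominated convergence, a finite cover of the compact set $\mathcal A$, and the strong law. The only variation is that the paper uses a single global modulus of continuity $\omega_\delta$ over a $\delta$-net instead of your local upper and lower brackets, and it does not address the measurability of that supremum, which your countable-dense-subset remark covers.
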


\begin{lemma}
\label{lem:rpo-existence}
Under the conditions of Lemma~\ref{lem:rpo-uniform-convergence}, suppose additionally that
$
\inf_{(\bm\theta,\eta)\in\mathcal A}R(\bm\theta,\eta)>0,
$
then  
$
\mathbb P\{
\operatorname*{arg\,min}_{(\bm\theta,\eta)\in\mathcal A}
J_n(\bm\theta,\eta)\neq\varnothing
\}\rightarrow 1.
$
\end{lemma}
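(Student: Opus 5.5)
The plan is to show that, with probability tending to one, \(J_n\) is a finite continuous function on the compact set \(\mathcal A\); Weierstrass' theorem then yields a minimizer. Three things are needed: \(R_n\) is finite on \(\mathcal A\), \(R_n\) is continuous there, and \(R_n\) is bounded away from zero so that \(\log R_n\) is finite and continuous.

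\emph{Finiteness and continuity of \(R_n\).} By admissibility, with probability one \(\mathcal T_{\bm\theta,\eta}(X_i)\neq\varnothing\) for every \((\bm\theta,\eta)\in\mathcal A\) and every \(i\le n\); this is a countable intersection of probability-one events. Hence each \(m_{\bm\theta,\eta}(X_i,Y_i)\) is a genuine infimum over a nonempty set. Because \(\psi_y\) is continuous on the compact set \([0,\eta_{\max}]\times\Theta_{\mathcal A}\), it is bounded there by some constant \(C\). Therefore
\[
m_{\bm\theta,\eta}(X_i,Y_i)\le 2Y_i^2+2C^2<\infty .
\]
Admissibility also gives continuity of \(m_{\bm\theta,\eta}(x,y)\) in \((\bm\theta,\eta)\) for almost every \((x,y)\). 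Intersecting these null-set exclusions over the \(n\) observations, \(R_n\) is almost surely a finite continuous function on \(\mathcal A\).

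\emph{Positivity.} Let \(c=\inf_{\mathcal A}R>0\). By Lemma~\ref{lem:rpo-uniform-convergence}, \(\sup_{\mathcal A}|R_n-R|\to0\) almost surely, hence in probability. Define the event
\[
E_n=\Bigl\{\sup_{\mathcal A}|R_n-R|<c/2\Bigr\}.
\]
Then \(\mathbb P(E_n)\to1\). On \(E_n\) we have \(R_n\ge c/2>0\) throughout \(\mathcal A\). Consequently \(\tfrac12\log R_n\) is finite and continuous on \(\mathcal A\), since \(\log\) is continuous on \((0,\infty)\). The penalty \(\lambda\eta/n\) is also continuous, so \(J_n\) is continuous on \(\mathcal A\). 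Compactness and nonemptiness of \(\mathcal A\) then give a nonempty argmin on \(E_n\) intersected with the probability-one event above. This proves \(\mathbb P\{\arg\min J_n\neq\varnothing\}\ge \mathbb P(E_n)-0\to1\).

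\emph{Main obstacle.} The delicate point is the zero-residual case: if \(R_n=0\) somewhere, then \(J_n=-\infty\) and no attained minimum of a real-valued function exists. This is exactly why the uniform convergence of Lemma~\ref{lem:rpo-uniform-convergence} is combined with \(\inf R>0\). A minor additional care is that the almost-sure statements in the admissibility definition are per-observation. They must be combined over finitely many observations, and they must hold simultaneously for all parameters; the definition of admissibility already quantifies over all of \(\mathcal A\) inside the probability, so this is immediate.
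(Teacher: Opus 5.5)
Your proposal is correct and follows essentially the same route as the paper: you combine the uniform convergence of Lemma~\ref{lem:rpo-uniform-convergence} with $\inf_{\mathcal A}R>0$ to get $R_n\ge c/2$ on an event of probability tending to one, so $J_n$ is finite and continuous on the compact set $\mathcal A$, and the Weierstrass theorem gives a minimizer. Your explicit check, via admissibility and the bound $m_{\bm\theta,\eta}(X_i,Y_i)\le 2Y_i^2+2C^2$, that $R_n$ is almost surely finite and continuous on $\mathcal A$ fills in a step the paper leaves implicit.
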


Let \((\hat{\bm\theta}_n,\hat\eta_n)\) be a measurable estimator
taking values in \(\mathcal A\) and satisfying
$
\mathbb P\{
(\hat{\bm\theta}_n,\hat\eta_n)
\in
\operatorname*{arg\,min}_{(\bm\theta,\eta)\in\mathcal A} \allowbreak
J_n(\bm\theta,\eta)
\}
\to1.
$
To state identification after eliminating the truncation variable, define
$
J_p(\bm\theta)
=
\inf\{J(\bm\theta,\eta):(\bm\theta,\eta)\in\mathcal A\},~
\bm\theta\in\Theta_{\mathcal A}.
$

\begin{theorem}
\label{thm:rpo-consistency}
Under the conditions of Lemmas~\ref{lem:rpo-uniform-convergence} and~\ref{lem:rpo-existence}, suppose that there exists $\bm\theta^\star\in\Theta_{\mathcal A}$ such that, for any $\epsilon>0$,
$
\inf_{\substack{\bm\theta\in\Theta_{\mathcal A}:\|\bm\theta-\bm\theta^\star\|\ge\epsilon}}
J_p(\bm\theta)
>
J_p(\bm\theta^\star),
$
then $\hat{\bm\theta}_n\xrightarrow{p}\bm\theta^\star$.
\end{theorem}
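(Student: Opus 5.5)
The argument is a standard argmin-consistency proof (in the spirit of Wald and of Newey--McFadden) run on the profiled criterion. There is one twist: the estimator minimizes $J_n$ jointly in $(\bm\theta,\eta)$, while identification is stated only for $J_p$, the criterion with $\eta$ profiled out.

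\emph{Step 1: uniform convergence of $J_n$.} By Lemma~\ref{lem:rpo-uniform-convergence}, $\sup_{\mathcal A}|R_n-R|\to0$ almost surely. Since $\mathcal A$ is compact, $\lambda\eta/n\le\lambda\eta_{\max}/n\to0$ uniformly. Set $c=\inf_{\mathcal A}R>0$, which holds by the hypothesis of Lemma~\ref{lem:rpo-existence}. I also need $R$ to be bounded above on $\mathcal A$; I would obtain this from continuity of $\psi_y$ on the compact set $[0,\eta_{\max}]\times\Theta_{\mathcal A}$ together with $\mathbb E Y^2<\infty$, exactly as in the domination step behind Lemma~\ref{lem:rpo-uniform-convergence}. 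Then, eventually, $R_n\ge c/2$ uniformly. Because $\log$ is Lipschitz on $[c/2,\infty)$ with constant $2/c$, we get $\Delta_n:=\sup_{\mathcal A}|J_n-J|\to0$ almost surely.

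\emph{Step 2: the profiled criteria are close.} Define $J_{n,p}(\bm\theta)=\inf\{J_n(\bm\theta,\eta):(\bm\theta,\eta)\in\mathcal A\}$. For each fixed $\bm\theta$, the infimum over the $\eta$-section of two functions that differ by at most $\Delta_n$ differ by at most $\Delta_n$, so $\sup_{\Theta_{\mathcal A}}|J_{n,p}-J_p|\le\Delta_n$. Next, on the event $E_n$ that $(\hat{\bm\theta}_n,\hat\eta_n)$ minimizes $J_n$ over $\mathcal A$ (where $\mathbb P(E_n)\to1$), we have $J_{n,p}(\hat{\bm\theta}_n)\le J_n(\hat{\bm\theta}_n,\hat\eta_n)=\inf_{\mathcal A}J_n=\inf_{\Theta_{\mathcal A}}J_{n,p}\le J_{n,p}(\bm\theta^\star)$. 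In other words, $\hat{\bm\theta}_n$ minimizes $J_{n,p}$. This transfer from the joint minimizer to the profiled one is the step I would check most carefully. It needs no continuity of $J_p$, only the sectionwise infimum inequality. It is the main conceptual point, because identification fails jointly: $\eta$ is not identified.

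\emph{Step 3: well-separated minimum.} Fix $\epsilon>0$ and let $\delta=\inf_{\|\bm\theta-\bm\theta^\star\|\ge\epsilon}J_p(\bm\theta)-J_p(\bm\theta^\star)>0$. If the set $\{\|\bm\theta-\bm\theta^\star\|\ge\epsilon\}\cap\Theta_{\mathcal A}$ is empty, the claim is trivial. Otherwise, on $E_n$ with $\|\hat{\bm\theta}_n-\bm\theta^\star\|\ge\epsilon$, we obtain
\[
J_p(\bm\theta^\star)+\delta\le J_p(\hat{\bm\theta}_n)\le J_{n,p}(\hat{\bm\theta}_n)+\Delta_n\le J_{n,p}(\bm\theta^\star)+\Delta_n\le J_p(\bm\theta^\star)+2\Delta_n .
\]
This forces $\Delta_n\ge\delta/2$. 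Hence $\mathbb P(\|\hat{\bm\theta}_n-\bm\theta^\star\|\ge\epsilon)\le\mathbb P(E_n^c)+\mathbb P(\Delta_n\ge\delta/2)\to0$, which gives $\hat{\bm\theta}_n\xrightarrow{p}\bm\theta^\star$.

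The only delicate technical point is the upper bound on $R$ in Step 1, since $\log$ is Lipschitz only away from zero. If an upper bound is unavailable, it is in fact unnecessary: the Lipschitz bound $|\log a-\log b|\le|a-b|/\min(a,b)$ requires only the lower bounds $R\ge c$ and $R_n\ge c/2$, and both are guaranteed by the assumptions. So I would use that form of the bound and avoid needing an upper bound altogether.
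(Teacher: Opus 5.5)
Your proposal is correct and follows essentially the same route as the paper. Both arguments get uniform convergence of $J_n$ to $J$ from the Lipschitz bound for $\log$ on $[c_0/2,\infty)$ plus $\lambda\eta_{\max}/n\to0$, pass to the profiled criterion $J_{n,p}$ whose distance to $J_p$ is controlled by the joint supremum, and finish with the well-separated-minimum argument, where your $\delta/2$ threshold plays the role of the paper's $\Delta_\epsilon/3$. Your justification that $\hat{\bm\theta}_n$ minimizes $J_{n,p}$ on the event $E_n$ is slightly more careful than the paper's phrasing, and, as you concluded yourself, no upper bound on $R$ is needed beyond its finiteness, which Lemma~\ref{lem:rpo-uniform-convergence} already provides.
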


Theorem~\ref{thm:rpo-consistency} applies a standard M-estimation consistency argument \citep{vandervaart1998asymptotic} to the profiled criterion on a nonempty compact admissible set.
Lemmas~\ref{lem:rpo-uniform-convergence} and~\ref{lem:rpo-existence} supply the empirical uniform convergence and minimizer existence used by the argument, while the identification condition in Theorem~\ref{thm:rpo-consistency} identifies the profiled population minimizer.
The theory remains conditional on the existence of such an admissible set; computation over the full restricted box is treated in the next subsection.

\subsection{A Soft-Constraint Solver for RPO}
\label{subsec:solver}

Theorem~\ref{thm:rpo-consistency} establishes consistency of the exact RPO estimator over the admissible set \(\mathcal A\subseteq\Theta\times[\eta_{\min},\eta_{\max}]\). 
In computation, however, \(\mathcal A\) is unknown. 
The algorithm can only search over the restricted box \(\mathcal B=\Theta\times[\eta_{\min},\eta_{\max}]\).
This creates a root set instability problem. 
For a fixed \((\bm\theta,\eta)\), RPO requires solving \(\psi_x(t_i;\bm\theta)=x_i\) exactly and enumerating the feasible latent root set \(\mathcal T_{\bm\theta,\eta}(x_i)\). 
When this root set is oscillatory, small perturbations of \((\bm\theta,\eta)\) may cause roots to appear, merge, disappear, or cross the boundary of \([0,\eta]\). As a result, the selected profiled latent coordinate \(\hat t_i(\bm\theta,\eta)\) may jump discontinuously as the parameter varies.


To avoid the fragility caused by exact root finding, a soft-constrained augmented-Lagrangian solver for RPO is proposed. 
The augmented Lagrangian method is a standard approach for equality-constrained optimization \citep{Bertsekas1999,NocedalWright2006}. 
For an equality constraint \(r=0\), it combines a Lagrange multiplier term \(u\), which records the local direction of constraint violation, with a quadratic penalty term \(\rho\), which discourages large residuals. 
In our setting, the equality constraint is the input matching \(\psi_x(t;\bm\theta)=x\). 
For a single observation \((x,y)\), let
\(r(t;\bm\theta,x)=\psi_x(t;\bm\theta)-x\) denote the input residual, and define
\[
\widetilde m_{\bm\theta,\eta}(x,y;u,\rho)
=
\inf_{t\in[0,\eta]}
\left\{
\|y-\psi_y(t;\bm\theta)\|^2
+
u\,r(t;\bm\theta,x)
+
{\rho}/{2}r^2(t;\bm\theta,x)
\right\}.
\]
Here \(u\) and \(\rho>0\) are fixed multiplier and penalty values. 
Its shifted quadratic form can be rewritten as
$
\widetilde m^{+}_{\bm\theta,\eta}(x,y;u,\rho)
=
\inf_{t\in[0,\eta]}
[
\{y-\psi_y(t;\bm\theta)\}^2
+
u r(t;\bm\theta,x)
+
{\rho}/{2}r^2(t;\bm\theta,x)
+
{u^2}/{2\rho}
].
$
Since \({u^2}/{2\rho}\) is independent of \(t\), the two criteria have the same minimizer, but the shifted version is always nonnegative.
Let \(\tilde t(x,y;u,\rho)\) be the corresponding soft-profiled latent coordinate that achieves the infimum.
Thus, for samples $\{(x_i,y_i)\}_{i=1}^n$, fixed multipliers \(\bm u=(u_1,\ldots,u_n)\) and penalty parameter \(\rho>0\), the soft-constrained regularized profile objective (SRPO) with respect to \((\bm\theta,\eta)\) is
\begin{equation}
\label{eq:soft-al-surrogate}
\widetilde J_n(\bm\theta,\eta;\bm u,\rho)
=
\frac12\log \{\widetilde R_n(\bm\theta,\eta;\bm u,\rho) + \delta\}
+
\frac{\lambda}{n}\eta,
~
(\bm\theta,\eta)\in\mathcal B ,
\end{equation}
where
$
\widetilde R_n(\bm\theta,\eta;\bm u,\rho)
=
{1}/{n}
\sum_{i=1}^n
\widetilde m^+_{\bm\theta,\eta}(x_i,y_i;u_i,\rho)
$
is the soft-constrained residual error, and \(\delta=10^{-10}\) is a small positive constant added to ensure numerical stability of the logarithm.
Moreover, because the residual terms only require the selected latent coordinates to lie in \([0,\eta]\) and the penalty \((\lambda/n)\eta\) increases as \(\eta\) increases, the fitted domain volume can be represented by its endpoint:
$
\hat\eta_n=\max\{\eta_{\min},\max_{1\le i\le n}\tilde t_i\}.
$
Further, when \(\eta_{\min}=0\), \(\hat\eta_n=\max_i\tilde t_i\).

Proposition~\ref{prop:al-continuity} below states the regularity of the soft-constrained surrogate, which explains why SRPO is numerically better behaved than the exact RPO target. 
The \(J_n(\bm\theta,\eta)\) may fail to be continuous outside the unknown admissible set \(\mathcal A\), because the feasible root set itself may change discontinuously. 
By contrast, the soft-constrained \(\widetilde J_n\) is continuous on the full restricted box \(\mathcal B\) for fixed \(\rho\) and bounded multipliers. 
This gives a stable objective for the search of \((\bm\theta,\eta)\).
\begin{proposition}
\label{prop:al-continuity}
Suppose that \(\Theta\) is compact, \(0\le\eta_{\min}<\eta_{\max}<\infty\), and both \(\psi_x(t;\bm\theta)\) and \(\psi_y(t;\bm\theta)\) are continuous in \((t,\bm\theta)\) on \([0,\eta_{\max}]\times\Theta\).
Then, for any fixed samples \(\{(x_i,y_i)\}_{i=1}^n\) and fixed \(0<\rho<\infty\), \(\widetilde J_n(\bm\theta,\eta;\bm u,\rho)\) is continuous in \((\bm\theta,\eta,\bm u)\) on $\Theta\times[\eta_{\min},\eta_{\max}]\times[-U,U]^n$ for any finite \(U>0\).
\end{proposition}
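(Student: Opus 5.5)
The plan is to reduce continuity of \(\widetilde J_n\) to continuity of each single-observation term \(\widetilde m^{+}_{\bm\theta,\eta}(x_i,y_i;u_i,\rho)\) as a function of \((\bm\theta,\eta,u_i)\). Once that is in hand, \(\widetilde R_n\) is a finite average of continuous functions and hence continuous. Each term is nonnegative, so \(\widetilde R_n+\delta\ge\delta>0\), which makes \(\log(\widetilde R_n+\delta)\) continuous. Finally, \((\lambda/n)\eta\) is linear, so \(\widetilde J_n\) is continuous on \(\Theta\times[\eta_{\min},\eta_{\max}]\times[-U,U]^n\). Nonnegativity comes from completing the square:
\[
u r+\tfrac{\rho}{2}r^2+\tfrac{u^2}{2\rho}=\tfrac{\rho}{2}\bigl(r+u/\rho\bigr)^2\ge 0 .
\]

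For a fixed observation \((x,y)\), define the integrand
\[
h(t,\bm\theta,u)=\{y-\psi_y(t;\bm\theta)\}^2+u\,r(t;\bm\theta,x)+\tfrac{\rho}{2}r^2(t;\bm\theta,x)+\tfrac{u^2}{2\rho}.
\]
It is continuous on the compact set \(K=[0,\eta_{\max}]\times\Theta\times[-U,U]\), because \(\psi_x\) and \(\psi_y\) are continuous there. It is therefore uniformly continuous on \(K\). The term we need is \(g(\bm\theta,\eta,u)=\min_{t\in[0,\eta]}h(t,\bm\theta,u)\); the infimum is attained because \([0,\eta]\) is compact and \(\eta\ge\eta_{\min}\ge0\) keeps the interval nonempty. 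I would treat this as a parametric minimization over the moving constraint set \(t\in[0,\eta]\). One route is Berge's maximum theorem, with the correspondence \(\eta\mapsto[0,\eta]\) compact-valued and continuous in the Hausdorff sense. For self-containment, I would instead give a direct argument.

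The direct argument splits into two parts. First, freeze \(\eta\). Uniform continuity of \(h\) gives \(|h(t,\bm\theta,u)-h(t,\bm\theta',u')|<\epsilon\) for all \(t\) whenever \((\bm\theta,u)\) and \((\bm\theta',u')\) are close. The minimum over a common set is 1-Lipschitz in the sup norm, so \(|g(\bm\theta,\eta,u)-g(\bm\theta',\eta,u')|\le\epsilon\). Second, move \(\eta\) to \(\eta'\), with \(\eta<\eta'\) without loss of generality. Then \(g(\cdot,\eta',\cdot)\le g(\cdot,\eta,\cdot)\) because the set is larger. Conversely, let \(t'\) be a minimizer in \([0,\eta']\). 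If \(t'>\eta\), then \(|t'-\eta|\le|\eta'-\eta|\), so uniform continuity in \(t\) gives \(h(\eta,\cdot)\le h(t',\cdot)+\epsilon\). In either case \(g(\cdot,\eta,\cdot)\le g(\cdot,\eta',\cdot)+\epsilon\) once \(|\eta-\eta'|\) is small. Combining the two parts with the triangle inequality gives joint continuity of \(g\).

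The main obstacle is this second part: continuity in \(\eta\) of a minimum over a varying domain. That is precisely where the exact RPO fails, since there the feasible set \(\mathcal T_{\bm\theta,\eta}(x)\) can jump. The step works here because the hard constraint has been replaced by a penalty, so the feasible set is the interval \([0,\eta]\), which varies continuously. The modulus of uniform continuity of \(h\) on the fixed compact \(K\) then controls all three perturbations uniformly. The role of boundedness of \(u\) in \([-U,U]\) and of \(\rho\in(0,\infty)\) is only to make \(K\) compact and the term \(u^2/(2\rho)\) finite and continuous.
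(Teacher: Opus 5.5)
Your proof is correct and has the same skeleton as the paper's: continuity of each term $\widetilde m^{+}_{\bm\theta,\eta}(x_i,y_i;u_i,\rho)$, averaging, completing the square to get nonnegativity so that $\log(\widetilde R_n+\delta)$ is well defined, and adding the linear penalty. The difference is in the one nontrivial step, continuity of a minimum over the moving interval $[0,\eta]$.

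\textbf{The paper's route.} The paper invokes Berge's maximum theorem. It notes that the integrand is continuous on $[0,\eta_{\max}]\times\Theta\times[-U,U]$ and that $\eta\mapsto[0,\eta]$ is a nonempty, compact-valued, Hausdorff-continuous correspondence.

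\textbf{Your route.} You mention Berge but instead prove this special case by hand, with three ingredients:
\begin{itemize}
\item uniform continuity of $h$ on $K$;
\item the fact that, for fixed $\eta$, the minimum is $1$-Lipschitz in the sup norm;
\item monotonicity of the minimum over nested intervals (the same fact as Proposition~\ref{prop:monotonicity}(i)), together with the observation that a minimizer over $[0,\eta']$ that falls outside $[0,\eta]$ lies within $\eta'-\eta$ of the endpoint $\eta$.
\end{itemize}

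\textbf{What each buys.} Your argument is self-contained and quantitative: each $\widetilde m^{+}$ inherits a modulus of continuity at most twice that of $h$, uniformly over $\Theta\times[\eta_{\min},\eta_{\max}]\times[-U,U]$. It does, however, use the special one-moving-endpoint structure of the constraint set. The paper's appeal to Berge is shorter and applies to any continuous compact-valued correspondence. It also gives, at no extra cost, upper hemicontinuity of the argmin correspondence. That is a stability property of the soft-profiled latent coordinates $\tilde t_i$ used in Algorithm~\ref{alg:al-rpo}, which your argument does not address. The proposition itself does not need it.
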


The complete algorithm for SRPO is summarized in Algorithm~\ref{alg:al-rpo}. 
At each iteration, the multipliers and penalty parameter $(\bm u^{(k)},\rho^{(k)})$ are fixed, the shape parameters and the domain volume are updated by minimizing \(\widetilde J_n\), and the resulting soft-profiled latent coordinates are used to compute the input residuals. 
These residuals update the multipliers, while the penalty parameter is increased only when residual contraction is insufficient. 
In this way, SRPO performs manifold modeling through soft latent matching and gradually moves the soft latent matches toward exact feasible roots, while avoiding fragile exact root enumeration during the parameter search.

\begin{algorithm}[ht]
\caption{Soft-constrained solver for RPO (SRPO)}
\label{alg:al-rpo}

\textbf{Input:} \(\{(x_i,y_i)\}_{i=1}^n\), regularization parameter \(\lambda\), restricted box \(\mathcal B=\Theta\times[\eta_{\min},\eta_{\max}]\).

\begin{algorithmic}[1]
\State Initialize \((\bm\theta^{(0)},\eta^{(0)})\in\mathcal B\), multipliers \(\bm u^{(0)}=\bm 0\), penalty parameter \(\rho^{(0)}=1\), contraction threshold \(\tau=1/2\), and penalty inflation factor \(\kappa_\rho=2\).

\For{\(k=0,1,2,\ldots\) until convergence}

  \State \textbf{Parameter update:} \((\bm\theta^{(k+1)},\eta^{(k+1)}) \in \arg\min_{(\bm\theta,\eta)\in\mathcal B} \widetilde J_n(\bm\theta,\eta;\bm u^{(k)},\rho^{(k)})\).

  \State \textbf{Latent coordinate update:} 
  $
  \widetilde t_i^{(k+1)}\in
  \arg\min_{t\in[0,\eta^{(k+1)}]}
  [
    \{y_i-\psi_y(t;\bm\theta^{(k+1)})\}^2
    +u_i^{(k)} r(\allowbreak t;\bm\theta^{(k+1)},x_i)
    +\frac{\rho^{(k)}}{2}r^2(t;\bm\theta^{(k+1)},x_i)
  ], 
  $
  for \(i=1,\ldots,n\).

  \State Compute \(r_i^{(k+1)} = \psi_x(\widetilde t_i^{(k+1)};\bm\theta^{(k+1)}) - x_i\) for \(i=1,\ldots,n\).

  \State \textbf{Multiplier update:} \(u_i^{(k+1)} = u_i^{(k)} + \rho^{(k)}r_i^{(k+1)}\), for \(i=1,\ldots,n\).

  \State \textbf{Penalty update:} Let \(\Delta^{(k+1)}=\max_{1\le i\le n}\|r_i^{(k+1)}\|\). If \(k\ge1\) and \(\Delta^{(k+1)}>\tau\Delta^{(k)}\), set \(\rho^{(k+1)}=\kappa_\rho\rho^{(k)}\); otherwise set \(\rho^{(k+1)}=\rho^{(k)}\).

  \State Stop if \(\Delta^{(k+1)}<\mathrm{tol}_r\).

\EndFor

\State Let \(K\) be the final iteration. 
Set \(\hat{\bm\theta}_n=\bm\theta^{(K)}\), \(\widetilde t_i=\widetilde t_i^{(K)}\) for \(i=1,\ldots,n\), and \(\hat\eta_n=\max\{\eta_{\min},\max_i\tilde t_i\}\).
\end{algorithmic}

\textbf{Output:} Estimated \((\hat{\bm\theta}_n,\hat\eta_n)\), and \(\{\widetilde t_i\}_{i=1}^n\).
\end{algorithm}

The default initial values \(\bm u^{(0)}=\bm 0\), \(\rho^{(0)}=1\), \(\tau=1/2\), and \(\kappa_\rho=2\) follow standard augmented-Lagrangian practice. Zero multipliers correspond to a quadratic-penalty start when no multiplier information is available. The choice \(\rho^{(0)}=1\) keeps the augmented term on the same order as the data-fitting term at the outset. The threshold \(\tau=1/2\) requires the maximum input residual to be reduced by at least one half before the penalty is held fixed, and \(\kappa_\rho=2\) doubles the penalty whenever this contraction is insufficient.

The next proposition shows that the soft-constrained latent update locally recovers an exact feasible root as the penalty parameter increases. 
Recall that \(\hat t_i(\bm\theta,\eta)\in (0,\eta)\) is the selected exact feasible latent root satisfying \(\psi_x(\hat t_i(\bm\theta,\eta);\bm\theta)=x_i\).

\begin{proposition}
\label{prop:al-root-approx}
Suppose that \(\psi_x(\cdot;\bm\theta)\) and \(\psi_y(\cdot;\bm\theta)\) are twice continuously differentiable in a neighborhood of \(\hat t_i(\bm\theta,\eta)\), and that \(\partial_t\psi_x(\hat t_i(\bm\theta,\allowbreak\eta); \allowbreak \bm\theta)\neq0\).
Then there exists \(\delta>0\), with \(I_\delta=[\hat t_i(\bm\theta,\eta)-\delta,~\hat t_i(\bm\theta,\eta)+\delta]\subset[0,\eta]\), such that for every bounded \(u_i\) and all sufficiently large \(\rho\), any minimizer
\[
\widetilde t_i(\bm\theta,\eta;u_i,\rho)
\in
\arg\min_{t\in I_\delta}
\left\{
\|y_i-\psi_y(t;\bm\theta)\|^2
+
u_i r(t;\bm\theta,x_i)
+
{\rho}/{2}r^2(t;\bm\theta,x_i)
\right\}
\]
satisfies \(|\widetilde t_i(\bm\theta,\eta;u_i,\rho)-\hat t_i(\bm\theta,\eta)| = O(\rho^{-1})\).
\end{proposition}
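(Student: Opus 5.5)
Write $t_0=\hat t_i(\bm\theta,\eta)$, $a=\partial_t\psi_x(t_0;\bm\theta)\neq0$, and $F_\rho(t)=\{y_i-\psi_y(t;\bm\theta)\}^2+u_i r(t)+\frac{\rho}{2}r^2(t)$ with $r(t)=\psi_x(t;\bm\theta)-x_i$, so that $r(t_0)=0$. The approach has two stages. First, a compactness and penalty argument confines any minimizer on $I_\delta$ to a shrinking neighbourhood of $t_0$. Second, a mean-value comparison, using the local invertibility of $\psi_x$, upgrades this to an $O(\rho^{-1})$ rate.

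\emph{Step 1: choice of $\delta$.} Since $t_0\in(0,\eta)$ and $\partial_t\psi_x$ is continuous, I choose $\delta>0$ so that $I_\delta\subset[0,\eta]$, the $C^2$ hypotheses hold on $I_\delta$, and $|\partial_t\psi_x(t)|\ge |a|/2$ on $I_\delta$. The mean value theorem then gives $|r(t)|\ge (|a|/2)|t-t_0|$ on $I_\delta$. In particular $t_0$ is the unique zero of $r$ in $I_\delta$.

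Let $C_1=\sup_{I_\delta}|\partial_t[\{y_i-\psi_y\}^2]|$, which is finite by continuity on the compact interval, and let $C_2=\sup_{I_\delta}|\partial_t\psi_x|$. Fix $|u_i|\le U$.

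\emph{Step 2: rate via comparison with $t_0$.} Let $\tilde t$ be any minimizer, which exists by continuity on the compact $I_\delta$. Then $F_\rho(\tilde t)\le F_\rho(t_0)$. Using $r(t_0)=0$, this rearranges to
\[
\frac{\rho}{2}r^2(\tilde t)\le \{y_i-\psi_y(t_0)\}^2-\{y_i-\psi_y(\tilde t)\}^2-u_i r(\tilde t)\le C_1|\tilde t-t_0|+U C_2|\tilde t-t_0| .
\]
The left side is at least $\frac{\rho a^2}{8}|\tilde t-t_0|^2$. If $\tilde t\neq t_0$, dividing by $|\tilde t-t_0|$ gives
\[
|\tilde t-t_0|\le \frac{8(C_1+UC_2)}{a^2\rho},
\]
which is $O(\rho^{-1})$ uniformly over bounded $u_i$. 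So the one-sided comparison already yields the rate directly.

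\emph{Step 3: what "sufficiently large $\rho$" buys.} The comparison in Step 2 holds for every $\rho$. Largeness of $\rho$ is needed only so that the bound is below $\delta$. This places the minimizer in the interior of $I_\delta$, so it is a genuine interior critical point that can be matched with the global soft-profiled coordinate. Optionally, one can read off the leading constant from the first-order condition $\partial_t F_\rho(\tilde t)=0$ together with the expansion $r(\tilde t)=a(\tilde t-t_0)+O(|\tilde t-t_0|^2)$ (this is where $C^2$ is used). This gives $\tilde t-t_0=-\{g'(t_0)+u_i a\}/(\rho a^2)+o(\rho^{-1})$, where $g(t)=\{y_i-\psi_y(t)\}^2$.

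\emph{Main obstacle.} The delicate point is Step 1: we need a uniform lower bound $|r(t)|\gtrsim|t-t_0|$ on the whole of $I_\delta$. Without it, other near-roots could attract the minimizer. The nonvanishing derivative at $t_0$ together with continuity of $\partial_t\psi_x$ supplies this bound, and it is exactly why the statement is local, restricting the minimization to $I_\delta$ rather than to all of $[0,\eta]$.
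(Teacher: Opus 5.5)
Your proof is correct and follows essentially the same route as the paper's. The paper also uses a mean-value two-sided bound $c_1|t-\hat t_i|\le|r(t)|\le c_2|t-\hat t_i|$ on $I_\delta$, a Lipschitz constant $L$ for $g$, and the one-sided comparison $F_\rho(\widetilde t_i)\le F_\rho(\hat t_i)$; with your $c_1=|a|/2$, $c_2=C_2$, $L=C_1$ this reproduces the paper's bound $2(L+Uc_2)/(\rho c_1^2)$ exactly. Your Step 3 remarks (interior critical point, leading-order constant) are extras that the statement does not need.
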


Proposition~\ref{prop:al-root-approx} is a local statement around a stable exact root. 
The differentiability assumption permits a Taylor expansion, and the simple root condition \(\partial_t\psi_x(\hat t_i(\bm\theta,\eta);\bm\theta)\neq0\) is the one-dimensional constraint qualification ensuring that the input equation is locally well behaved. 
Under these conditions, the soft-profiled latent coordinate approaches the corresponding exact latent root as the penalty increases. 
Thus the exact RPO objective \(J_n(\bm\theta,\eta)\) remains the statistical target, while the soft-constrained surrogate \(\widetilde J_n(\bm\theta,\eta;\bm u,\rho)\) provides a smoother and more stable computational path on the full restricted box \(\mathcal B\): for finite \(\rho\), it replaces exact feasibility with soft latent matching, and as \(\rho\) increases with multiplier updates, these soft matches are gradually pushed toward exact input feasibility.

\begin{figure}[t]
  \centering
  \includegraphics[width=0.4\textwidth]{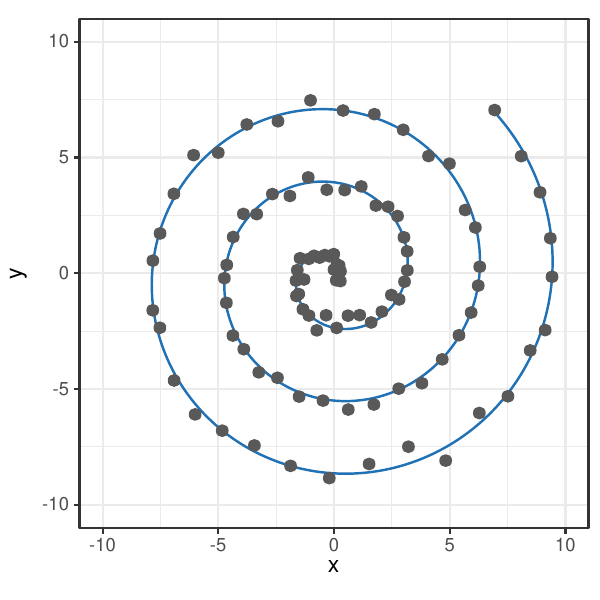}
  \caption{Archimedean spiral data ($n=100$, output noise SD $=0.3$) with SRPO manifold fit (blue solid curve) overlaid on the scatter plot.}
  \label{fig:srpo-spiral}
\end{figure}

\begin{example}[Spiral modeling]
\label{ex:spiral-modeling}
Consider the Archimedean spiral model in \eqref{eq:spiral} with \(\alpha^\star=0.5\), \(\eta^\star=6\pi+\pi/4\), \(n=100\), and \(\epsilon_y\sim\mathcal N(0,0.3^2)\). The soft-constrained solver uses \(\lambda=1\), at most \(30\) outer augmented-Lagrangian iterations, and at most \(100\) iterations per parameter update. 
The fitted parameters are \(\hat\alpha=0.50\) and \(\hat\eta=19.64\). Figure~\ref{fig:srpo-spiral} shows the recovered manifold against the noisy observations.
\end{example}

\section{Manifold Prediction}
\label{sec:prediction}

Unlike the one-to-one cases, prediction is non-trivial for noto cases.
This section proposes a slicing mechanism to utilize the estimated manifold to produce set-valued predictions. Specifically, for a given input \(x_0\), manifold sliced prediction intersects the estimated manifold with the vertical slice \(x=x_0\) and returns all corresponding output coordinates. 
Similar slicing can be performed for any target output value \(y_0\) by intersecting with the horizontal slice \(y=y_0\).
The prediction set is naturally set-valued because a slice can intersect the same manifold at multiple latent positions.

\subsection{Manifold Sliced Prediction}

For a candidate parameter pair \((\bm\theta,\eta)\), 
manifold sliced prediction at input \(x\) is defined by
\[
\mathcal F_{\bm\theta,\eta}(x)
=
\left\{
\psi_y(t;\bm\theta):\ t\in[0,\eta],\ \psi_x(t;\bm\theta)=x
\right\}.
\]
Here \(\eta\) plays the same role as in manifold modeling: it determines the active latent domain over which the slice is taken, so remote roots outside \([0,\eta]\) are excluded.

Computationally, for a fixed query \(x\), we solve \(g_x(t;\bm\theta)=\psi_x(t;\bm\theta)-x=0\) over \(t\in[0,\eta]\).
The implementation uses the R package \texttt{rootSolve} \citep{Soetaert2009RootSolve}.
This routine first subdivides the search interval into several subintervals $(s_{i-1},~s_i)$, where \(0=s_0<s_1<\cdots<s_m=\eta\).
If \(g_x(s_{\ell-1};\bm\theta)\) and \(g_x(s_\ell;\bm\theta)\) have opposite signs, then \([s_{\ell-1},s_\ell]\) contains a crossing under continuity, and the one-dimensional root solver refines a root inside that subinterval.
Grid points, including the two endpoints \(0\) and \(\eta\), are also retained as candidate roots when \(|g_x(s_\ell;\bm\theta)|\) is below a small tolerance.
All candidate roots are then checked by their residual \(|g_x(t;\bm\theta)|\), nearly duplicated latent roots are merged, and duplicated output values are merged after applying \(\psi_y\).
If the retained roots are \(\widehat t_1(x),\ldots,\widehat t_{m_x}(x)\), the reported sliced prediction is \(\widehat{\mathcal F}_{\bm\theta,\eta}(x)=\{\psi_y(\widehat t_j(x);\bm\theta):j=1,\ldots,m_x\}\), where \(m_x\) is the number of retained roots.
This procedure defines the manifold slicing mechanism.

To define the population target, recall that the latent coordinate \(T\) has support \([0,\eta^\star]\).
At the population level, the ideal slice uses the true shape parameters \(\bm\theta^\star\) and the population domain volume \(\eta^\star\). 
Therefore, the ideal manifold sliced prediction set is \(\mathcal F_{\bm\theta^\star,\eta^\star}(x)\).

Given the estimator $(\hat{\bm\theta}_n,\hat\eta_n)$ obtained from manifold modeling, the corresponding plug-in manifold sliced prediction set is
$
\mathcal F_{\hat{\bm\theta}_n,\hat\eta_n}(x)
=
\{
\psi_y(t;\hat{\bm\theta}_n): t\in[0,\hat\eta_n],  \psi_x(t;\hat{\bm\theta}_n)\allowbreak=x
\}.
$
Thus prediction is a direct downstream consequence of manifold modeling, and it is obtained by solving the input projection equation on the estimated manifold.

The consistency of manifold sliced prediction is established as follows.
The Hausdorff distance \citep{molchanov2005random} is introduced to measure the distance between the plug-in prediction set and the ideal prediction set, which is defined as follows.
\[
d_H(\mathcal C_1,\mathcal C_2)
=
\max\left\{
\sup_{a\in\mathcal C_1}\inf_{b\in\mathcal C_2}\|a-b\|,
\sup_{b\in\mathcal C_2}\inf_{a\in\mathcal C_1}\|a-b\|
\right\},
\]
where \(\mathcal C_1\) and \(\mathcal C_2\) are two non-empty compact subsets.

\begin{theorem}
\label{thm:sliced-prediction}
Suppose that \(\hat{\bm\theta}_n\xrightarrow{p}\bm\theta^\star\), \(\hat\eta_n\xrightarrow{p}\eta^\star\),
and there exist \(\delta>0\) and a neighborhood \(\mathcal N\) of \(\bm\theta^\star\) such that \(\psi_x\) and \(\psi_y\) are continuously differentiable on \([0,\eta^\star+\delta]\times \mathcal N\). 
For a given $x_0$, if the equation \(\psi_x(t;\bm\theta^\star)=x_0\) has finitely many solutions on \([0,\eta^\star+\delta]\): \(0<t_1^\star<\cdots<t_K^\star<\eta^\star\), and \(\partial_t\psi_x(t_k^\star;\bm\theta^\star)\neq0\) for \(k=1,\ldots,K\), 
then
$
d_H\{\mathcal F_{\hat{\bm\theta}_n,\hat\eta_n}(x_0),\mathcal F_{\bm\theta^\star,\eta^\star}(x_0)\}\xrightarrow{p}0.
$
\end{theorem}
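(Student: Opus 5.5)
The plan is an implicit-function-theorem argument around each simple root $t_k^\star$, together with a compactness argument that rules out spurious roots away from the $t_k^\star$. We work on the event $E_n=\{\hat{\bm\theta}_n\in\mathcal N_0,\ |\hat\eta_n-\eta^\star|<\delta_0\}$ for a small neighbourhood $\mathcal N_0\subset\mathcal N$ and a small $\delta_0\le\delta$. By the assumed consistency, $\mathbb P(E_n)\to1$. It then suffices to show a deterministic statement: for every $\epsilon>0$ there are $\mathcal N_0,\delta_0$ such that, for all $(\bm\theta,\eta)$ with $\bm\theta\in\mathcal N_0$ and $|\eta-\eta^\star|<\delta_0$,
\[
d_H\{\mathcal F_{\bm\theta,\eta}(x_0),\mathcal F_{\bm\theta^\star,\eta^\star}(x_0)\}<\epsilon.
\]

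\textbf{Step 1 (local roots).} Since $\partial_t\psi_x(t_k^\star;\bm\theta^\star)\neq0$ and $\psi_x$ is $C^1$ near $(t_k^\star,\bm\theta^\star)$, the implicit function theorem gives, for each $k$:
\begin{itemize}
\item a radius $r>0$, with the intervals $[t_k^\star-r,t_k^\star+r]$ pairwise disjoint and contained in $(0,\eta^\star-r)$;
\item a neighbourhood $\mathcal N_0$ of $\bm\theta^\star$ and a continuous map $t_k(\bm\theta)$ with $t_k(\bm\theta^\star)=t_k^\star$;
\end{itemize}
such that $t_k(\bm\theta)$ is the \emph{unique} solution of $\psi_x(t;\bm\theta)=x_0$ in $[t_k^\star-r,t_k^\star+r]$. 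Uniqueness holds because $\partial_t\psi_x$ keeps a constant sign there, by shrinking $r$ and $\mathcal N_0$ and using continuity. Taking $\delta_0<r$, each $t_k(\bm\theta)$ lies in $[0,\eta]$. Continuity of $\psi_y$ then gives $|\psi_y(t_k(\bm\theta);\bm\theta)-\psi_y(t_k^\star;\bm\theta^\star)|<\epsilon$ after shrinking $\mathcal N_0$. This controls $\sup_{b\in\mathcal F^\star}\inf_{a\in\hat{\mathcal F}}|a-b|$ and also shows that $\hat{\mathcal F}$ is nonempty (when $K\ge1$).

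\textbf{Step 2 (no spurious roots); this is the main obstacle.} We must exclude roots of $\psi_x(\cdot;\bm\theta)=x_0$ in
\[
C=[0,\eta^\star+\delta]\setminus\textstyle\bigcup_k(t_k^\star-r,t_k^\star+r).
\]
The set $C$ is compact and contains no root of $\psi_x(\cdot;\bm\theta^\star)=x_0$ on $[0,\eta^\star+\delta]$, by the assumption that the $t_k^\star$ are all the solutions there. Hence
\[
c=\min_{t\in C}|\psi_x(t;\bm\theta^\star)-x_0|>0.
\]
By uniform continuity of $\psi_x$ on the compact set $[0,\eta^\star+\delta]\times\overline{\mathcal N_0}$ (shrinking $\mathcal N_0$ to have compact closure inside $\mathcal N$), we have $\sup_{t\in C}|\psi_x(t;\bm\theta)-\psi_x(t;\bm\theta^\star)|<c$ for $\bm\theta$ near $\bm\theta^\star$. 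So there are no roots in $C$, and since $\eta<\eta^\star+\delta_0\le\eta^\star+\delta$, every root in $[0,\eta]$ is one of the $t_k(\bm\theta)$. Consequently every element of $\hat{\mathcal F}$ is within $\epsilon$ of some $\psi_y(t_k^\star;\bm\theta^\star)$, which bounds the other Hausdorff term.

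The delicate points are the endpoints. Roots near $0$ or near $\eta^\star$ are excluded because $t_1^\star>0$ and $t_K^\star<\eta^\star$ hold strictly, and because the hypothesis covers the enlarged interval $[0,\eta^\star+\delta]$. The latter is exactly what absorbs $\hat\eta_n$ overshooting $\eta^\star$. Combining Steps 1 and 2 gives the displayed bound on $E_n$, so
\[
\mathbb P\bigl(d_H\{\mathcal F_{\hat{\bm\theta}_n,\hat\eta_n}(x_0),\mathcal F_{\bm\theta^\star,\eta^\star}(x_0)\}\ge\epsilon\bigr)\le\mathbb P(E_n^c)\to0.
\]
The case $K=0$ is handled by Step 2 alone: both sets are empty with probability tending to one, under the natural convention for empty sets.
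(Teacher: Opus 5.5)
Your proposal is correct and follows essentially the same route as the paper's proof. Both use implicit-function-theorem root branches $t_k(\bm\theta)$ in disjoint intervals around each simple root, a compactness bound on the complement $C\subset[0,\eta^\star+\delta]$ to rule out spurious roots, and an event of probability tending to one on which $\hat{\bm\theta}_n$ lies near $\bm\theta^\star$ and $|\hat\eta_n-\eta^\star|$ is below the interval radius; your deterministic $\epsilon$-reduction and the $K=0$ remark are only cosmetic departures from the paper's branchwise convergence-in-probability step.
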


Theorem~\ref{thm:sliced-prediction} shows that manifold sliced prediction inherits shape consistency from manifold modeling, but also requires coverage consistency of the active domain volume. 
Specifically, the condition \(\hat\eta_n\xrightarrow{p}\eta^\star\) requires that the domain volume also be recovered, because the plug-in slice is taken over \([0,\hat\eta_n]\), whereas the population slice is taken over \([0,\eta^\star]\).
If \(\hat\eta_n\) is systematically smaller than \(\eta^\star\), some latent states contributing to \(\mathcal F_{\bm\theta^\star,\eta^\star}(x_0)\) may be excluded from the plug-in slice, and the Hausdorff distance cannot generally vanish.

Note that the fitted domain volume is reported in endpoint form, \(\hat\eta_n=\max_i\widetilde t_i\). 
Thus, if the recovered latent coordinates satisfy \(\max_i|\widetilde t_i-T_i|\xrightarrow{p}0\), it is enough to ensure that the sample endpoint \(\max_i T_i\) converges to \(\eta^\star\). 
The next proposition gives a sufficient condition for \(\hat\eta_n\xrightarrow{p}\eta^\star\).

\begin{proposition}
\label{prop:eta-convergence}
If \(T_1,\ldots,T_n\) are independent and identically distributed with support \([0,\eta^\star]\), then \(\max_{1\le i\le n}T_i\xrightarrow{p}\eta^\star\).
\end{proposition}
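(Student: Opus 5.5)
The plan is to prove convergence in probability directly from the definition of the support. Write $M_n=\max_{1\le i\le n}T_i$. Since each $T_i$ takes values in $[0,\eta^\star]$ almost surely, $M_n\le\eta^\star$ almost surely. It therefore suffices to show that for every $\epsilon>0$,
\[
\mathbb P\{|M_n-\eta^\star|>\epsilon\}=\mathbb P\{M_n<\eta^\star-\epsilon\}\to0 .
\]
We may take $\epsilon<\eta^\star$; otherwise the probability is zero trivially.

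The key step uses the meaning of ``support $[0,\eta^\star]$''. The support of the law of $T$ is the smallest closed set of full measure, equivalently the set of points every open neighborhood of which has positive probability. Because $\eta^\star$ belongs to the support, the neighborhood $(\eta^\star-\epsilon,\eta^\star+\epsilon)$ has positive probability. Since $T\le\eta^\star$ almost surely, this gives
\[
p_\epsilon:=\mathbb P\{T_1>\eta^\star-\epsilon\}>0 .
\]
This is the only point needing care, and it is where the support assumption enters. Had we only known $T_i\le\eta^\star$, with $\eta^\star$ not the essential supremum, the claim would fail.

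By independence and identical distribution,
\[
\mathbb P\{M_n\le\eta^\star-\epsilon\}=\prod_{i=1}^n\mathbb P\{T_i\le\eta^\star-\epsilon\}=(1-p_\epsilon)^n .
\]
Since $0<p_\epsilon\le1$, this tends to $0$ geometrically, which gives $M_n\xrightarrow{p}\eta^\star$.

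As a remark, the sum $\sum_n(1-p_\epsilon)^n$ is finite. Borel--Cantelli then upgrades the conclusion to almost sure convergence, and monotonicity of $M_n$ gives the same upgrade independently. Neither is needed for the stated proposition.
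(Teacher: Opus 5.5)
Your proof is correct and follows essentially the same route as the paper: both use $M_n\le\eta^\star$ almost surely for the upper tail, use the support assumption to get $\mathbb P\{T>\eta^\star-\epsilon\}>0$, and apply independence to obtain $\mathbb P\{M_n\le\eta^\star-\epsilon\}=\mathbb P\{T\le\eta^\star-\epsilon\}^n\to0$. Your closing remark on almost sure convergence goes beyond what the paper states, but it is correct and, as you note, not needed.
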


Proposition~\ref{prop:eta-convergence} clarifies the role of sampling in estimating the domain volume. 
The endpoint \(\eta^\star\) can only be recovered if the sampled latent states reach arbitrarily close to the right endpoint of the latent support. 
Together with uniform recovery of the latent coordinates, this yields \(\hat\eta_n\xrightarrow{p}\eta^\star\). 
Thus, the domain-volume condition in Theorem~\ref{thm:sliced-prediction} reflects the need for sufficient sample coverage of the latent domain in manifold prediction. Section~\ref{sec:results} revisits this point in the hysteresis case study.

\subsection{Manifold Sliced Inverse Prediction}

The same latent manifold also supports manifold sliced inverse prediction. 
Similarly, for fixed $(\bm\theta,\eta)$ and a specified output value $y$, let
$
\mathcal G_{\bm\theta,\eta}(y)
=
\left\{
\psi_x(t;\bm\theta):\ t\in[0,\eta],\ \psi_y(t;\bm\theta)=y
\right\}.
$
Moreover, the ideal manifold sliced inverse prediction set is denoted by $\mathcal G_{\bm\theta^\star,\eta^\star}(y)$.
Given the estimator $(\hat{\bm\theta}_n,\hat\eta_n)$, the corresponding plug-in manifold sliced inverse prediction set is
$
\mathcal G_{\hat{\bm\theta}_n,\hat\eta_n}(y)
=
\left\{
\psi_x(t;\hat{\bm\theta}_n):\ t\in[0,\hat\eta_n],\ \psi_y(t;\hat{\bm\theta}_n)=y
\right\}.
$

Under this formulation, for any fixed \((\bm\theta,\eta)\), \(y\in\mathcal F_{\bm\theta,\eta}(x)\) if and only if \(x\in\mathcal G_{\bm\theta,\eta}(y)\). 
Thus manifold sliced prediction and manifold sliced inverse prediction are two coordinate slices of the same fitted manifold.
A consistency result for manifold sliced inverse prediction follows by symmetry, after exchanging the roles of $\psi_x$ and $\psi_y$.
As a direct corollary of Theorem~\ref{thm:sliced-prediction}, Corollary~\ref{cor:sliced-inverse-prediction} gives a solution for multivalued inverse problems under the estimated manifold structure.

\begin{corollary}
\label{cor:sliced-inverse-prediction}
Suppose that \(\hat{\bm\theta}_n\xrightarrow{p}\bm\theta^\star\), \(\hat\eta_n\xrightarrow{p}\eta^\star\),
and there exist \(\delta>0\) and a neighborhood \(\mathcal N\) of \(\bm\theta^\star\) such that \(\psi_x\) and \(\psi_y\) are continuously differentiable on \([0,\eta^\star+\delta]\times \mathcal N\). 
For a given $y_0$, if the equation \(\psi_y(t;\bm\theta^\star)=y_0\) has finitely many solutions on \([0,\eta^\star+\delta]\): \(0<s_1^\star<\cdots<s_L^\star<\eta^\star\), and \(\partial_t\psi_y(s_\ell^\star;\bm\theta^\star)\neq0\) for \(\ell=1,\ldots,L\),
then
$
d_H\{\mathcal G_{\hat{\bm\theta}_n,\hat\eta_n}(y_0),\mathcal G_{\bm\theta^\star,\eta^\star}(y_0)\}\xrightarrow{p}0.
$
\end{corollary}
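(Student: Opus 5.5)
The plan is to reduce Corollary~\ref{cor:sliced-inverse-prediction} to Theorem~\ref{thm:sliced-prediction} by exchanging coordinates. Define the swapped generative map \(\widetilde{\mathbf\Psi}(t;\bm\theta)=[\widetilde\psi_x(t;\bm\theta),\widetilde\psi_y(t;\bm\theta)]\) with \(\widetilde\psi_x=\psi_y\) and \(\widetilde\psi_y=\psi_x\). The sliced prediction set of the swapped map at input \(y\) is
\[
\widetilde{\mathcal F}_{\bm\theta,\eta}(y)=\{\widetilde\psi_y(t;\bm\theta):t\in[0,\eta],\ \widetilde\psi_x(t;\bm\theta)=y\}=\mathcal G_{\bm\theta,\eta}(y).
\]
This identity holds pathwise for every \((\bm\theta,\eta)\). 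In particular it holds at the plug-in values \((\hat{\bm\theta}_n,\hat\eta_n)\) and at the population values \((\bm\theta^\star,\eta^\star)\).

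Next, I will check that every hypothesis of Theorem~\ref{thm:sliced-prediction} holds for the swapped map at the query point \(y_0\).
\begin{itemize}
\item The convergences \(\hat{\bm\theta}_n\xrightarrow{p}\bm\theta^\star\) and \(\hat\eta_n\xrightarrow{p}\eta^\star\) are assumed directly.
\item Continuous differentiability of \(\widetilde\psi_x\) and \(\widetilde\psi_y\) on \([0,\eta^\star+\delta]\times\mathcal N\) is the same assumption with the roles relabeled.
\item The finite root set \(0<s_1^\star<\cdots<s_L^\star<\eta^\star\) of \(\widetilde\psi_x(t;\bm\theta^\star)=y_0\) on \([0,\eta^\star+\delta]\) is given.
\item The nondegeneracy \(\partial_t\widetilde\psi_x(s_\ell^\star;\bm\theta^\star)=\partial_t\psi_y(s_\ell^\star;\bm\theta^\star)\neq0\) is given.
\end{itemize}

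The only care needed is to confirm that the proof of Theorem~\ref{thm:sliced-prediction} uses nothing about which coordinate carries the noise, the injectivity of \(\mathbf\Psi\), or the sampling model. The theorem is a deterministic perturbation statement about root sets combined with convergence in probability of the plug-in parameters, so it never uses these facts. The sampling model enters only through the assumed consistency of \((\hat{\bm\theta}_n,\hat\eta_n)\), which the corollary also assumes.

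I therefore expect no real obstacle. For completeness, I will recall the mechanism that makes the transfer valid.
\begin{itemize}
\item On the event that \((\hat{\bm\theta}_n,\hat\eta_n)\) is close to \((\bm\theta^\star,\eta^\star)\), the implicit function theorem applied at each simple root \(s_\ell^\star\) gives a unique perturbed root \(\hat s_\ell\to s_\ell^\star\). Each such root lies in \([0,\hat\eta_n]\) because \(s_\ell^\star\) is strictly interior.
\item A compactness argument on \([0,\eta^\star+\delta]\) minus small neighborhoods of the \(s_\ell^\star\) shows that \(|\psi_y(t;\bm\theta^\star)-y_0|\) is bounded away from zero there. By uniform continuity, the perturbed equation then has no spurious roots there.
\item Continuity of \(\psi_x\) then gives convergence of the finite sets \(\{\psi_x(\hat s_\ell;\hat{\bm\theta}_n)\}\) to \(\{\psi_x(s_\ell^\star;\bm\theta^\star)\}\) in Hausdorff distance.
\end{itemize}

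Since the probability of this event tends to one, the corollary follows immediately from Theorem~\ref{thm:sliced-prediction} applied to \(\widetilde{\mathbf\Psi}\).
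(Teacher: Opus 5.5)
Your proposal is correct and matches the paper's argument. The paper also proves the corollary by interchanging the roles of \(\psi_x\) and \(\psi_y\) in the proof of Theorem~\ref{thm:sliced-prediction} (implicit-function root branches, compactness to exclude spurious roots on \([0,\eta^\star+\delta]\), and a branchwise bound on the Hausdorff distance); the only difference is that the paper re-runs that argument explicitly, whereas you apply the theorem as a black box to the swapped map \(\widetilde{\mathbf\Psi}\), which is legitimate since its hypotheses and proof are purely analytic apart from the assumed consistency of \((\hat{\bm\theta}_n,\hat\eta_n)\).
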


The section concludes by revisiting the spiral example and evaluating the forward and inverse prediction errors under the SRPO fit.

\addtocounter{example}{-1}
\renewcommand{\theHexample}{continued.\arabic{example}}
\begin{example}[Continued: Spiral prediction]
\label{ex:spiral-prediction}
Revisit Example~\ref{ex:spiral-modeling} and use its SRPO fit. 
Take $\mathcal X_{\mathrm{test}}$ and $\mathcal Y_{\mathrm{test}}$ to be $1000$-point uniform grids on the x and y ranges. 
The mean forward error 
$\bar d_{\mathcal F}=|\mathcal X_{\mathrm{test}}|^{-1}\sum_{x\in \mathcal X_{\mathrm{test}}} d_H\{\mathcal F_{\hat{\bm\theta}_n,\hat\eta_n}(x),\mathcal F_{\bm\theta^\star,\eta^\star}(x)\}=0.012$ 
and the mean inverse error 
$\bar d_{\mathcal G}=|\mathcal Y_{\mathrm{test}}|^{-1} \allowbreak \sum_{y\in \mathcal Y_{\mathrm{test}}} d_H\{\mathcal G_{\hat{\bm\theta}_n,\hat\eta_n}(y),\mathcal G_{\bm\theta^\star,\eta^\star}(y)\}=0.021$.
\end{example}

The example illustrates that the same estimated manifold induces both manifold sliced prediction and manifold sliced inverse prediction. 
The reported errors are small relative to the scale of the data, confirming that these sliced prediction sets recover the true admissible sets under multivaluedness.

\section{Case Studies}
\label{sec:results}

This section evaluates manifold regression on three representative noto structures beyond the spiral: a hysteresis loop, a prolate trochoid, and a Lissajous curve. 
These examples are chosen to test different geometric features of manifold prediction: two-branch hysteresis, looped open curves, and periodic self-intersections. 
In all examples, data are generated from a noiseless structural manifold with additive noise only in the output coordinate. 
Specifically, the latent states \(t_i\) are placed on an equally spaced grid over the stated latent interval, the inputs are generated as \(x_i=\psi_x(t_i;\bm\theta^\star)\), and the responses are generated as \(y_i=\psi_y(t_i;\bm\theta^\star)+\epsilon_i\), where \(\epsilon_i\sim N(0,0.1^2)\). 
The true set-valued forward prediction and inverse prediction sets are computed from the noiseless structural manifold.

All examples are fitted by the SRPO solver in Algorithm~\ref{alg:al-rpo}. 
The augmented Lagrangian loop is allowed to run for at most \(30\) outer steps. 
Within each outer step, latent profiling is carried out by the L-BFGS-B option in the R function \texttt{optim} \citep{R-optim}, and the shape parameters and the domain volume \((\bm\theta,\eta)\) are updated by the BOBYQA option in the R package \texttt{nloptr} \citep{R-nloptr}, with at most \(100\) BOBYQA iterations. 
The default parameter settings are \(\lambda=1\), \(\rho_0=1\), \(\tau=1/2\), \(\kappa_{\rho}=2\), and a stopping tolerance \(\max_i |r_i|<10^{-3}\).

Prediction accuracy is measured by mean Hausdorff error. 
For set-valued forward prediction, \(d_H\{\mathcal F_{\hat{\bm\theta}_n,\hat\eta_n}(x),\mathcal F_{\bm\theta^\star,\eta^\star}(x)\}\) is averaged over a \(1000\)-point grid on the input range where \(\mathcal F_{\bm\theta^\star,\eta^\star}(x)\) is nonempty. 
The inverse prediction error is defined analogously over the output range where \(\mathcal G_{\bm\theta^\star,\eta^\star}(y)\) is nonempty. 

\subsection{Hysteresis Loop}

The hysteresis case represents a two-branch noto relation in which the same input level may correspond to distinct output states. 
Motivated by classical hysteresis loops \citep{mayergoyz2003mathematical}, the noiseless structural manifold is
\[
\psi_x(t;\bm\theta)=\theta_1\cos t,
~
\psi_y(t;\bm\theta)=\theta_2\tanh\{\cos(t-\theta_3)\}.
\]
The latent states are generated on \(t_i\in[0,2\pi]\), the sample size is \(n=50\), and the true parameters are \(\bm\theta^\star=(2,3,0.5)\) with \(\eta^\star=2\pi\). 
The SRPO search is 
constrained to \(\theta_1\in[0.5,5]\), \(\theta_2\in[1,6]\), \(\theta_3\in[0,2]\), and \(\eta\in[\pi,6\pi]\).

\begin{figure}[tb]
  \centering
  \includegraphics[width=0.8\textwidth]{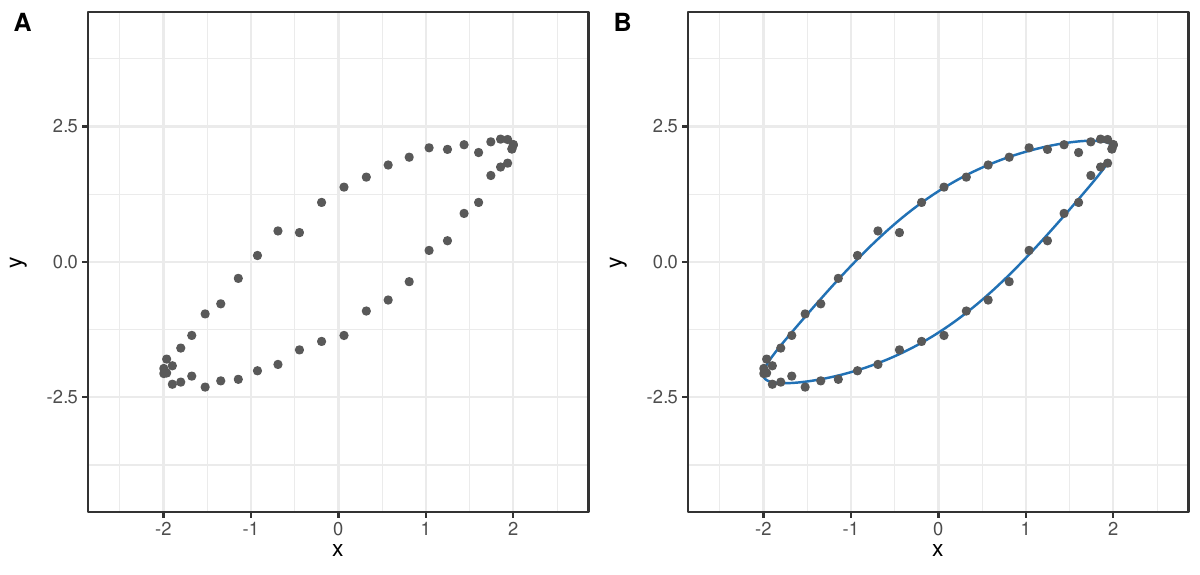}
  \caption{Hysteresis loop data ($n=50$, output noise SD $=0.1$). A: scatter plot of observations. B: SRPO manifold fit (blue solid curve) overlaid on the data.}
  \label{fig:hysteresis_case}
\end{figure}

The SRPO fit in Figure~\ref{fig:hysteresis_case} recovers the two-branch loop from noisy point-valued observations, returning 
\((\hat\theta_1,\hat\theta_2,\hat\theta_3,\hat\eta_n)=(2.002,2.940,0.498,6.023)\). 
That is, \(\hat{\psi}_x(t;\hat{\bm\theta}_n)=2.002\cos t\) and 
\(\hat{\psi}_y(t;\hat{\bm\theta}_n)=2.940\tanh\{\cos(t-0.498)\}\).
The induced set-valued forward and inverse predictions preserve the separated branches, attaining mean Hausdorff errors \(0.045\) and \(0.094\), respectively.

Note that the fitted domain volume \(\hat\eta_n=6.023\) is slightly smaller than the population value 
\(\eta^\star=2\pi\approx6.283\), because the largest sampled latent state \(\max_i t_i\) does not reach the right endpoint of the latent domain.
Consequently, the largest forward Hausdorff error occurs at the input level \(x=1.940\).
At this input level, the true slice has two latent roots, \(t=0.246\) and \(t=6.037\), while the plug-in slice retains only the first root, yielding a Hausdorff distance of \(0.323\) after mapping the roots to the output coordinate.
This finite-sample behavior is consistent with Proposition~\ref{prop:eta-convergence}, which shows that recovery of the population domain volume relies on coverage of the latent domain by the sampled latent states.
This observation also suggests that the experimental design of latent states may affect the efficiency of manifold prediction, although a formal design theory is beyond the scope of the present paper.

\subsection{Prolate Trochoid}

The prolate trochoid case tests an open manifold with repeated branches and looped geometry. 
Trochoids are standard roulette curves, and the prolate case contains loops \citep{yates1952handbook,weisstein2026trochoid}. 
The noiseless structural manifold is
\[
\psi_x(t;\bm\theta)=\theta_1 t-\theta_2\sin t,
~
\psi_y(t;\bm\theta)=\theta_1-\theta_2\cos t.
\]
The latent states are generated on \(t_i\in[0,6\pi]\), the sample size is \(n=50\), and the true parameters are \(\bm\theta^\star=(1,3)\) with \(\eta^\star=6\pi\). 
The SRPO search is 
constrained to \(\theta_1\in[0.2,2]\), \(\theta_2\in[0.3,3]\), and \(\eta\in[2\pi,10\pi]\).

\begin{figure}[tb]
  \centering
  \includegraphics[width=0.8\textwidth]{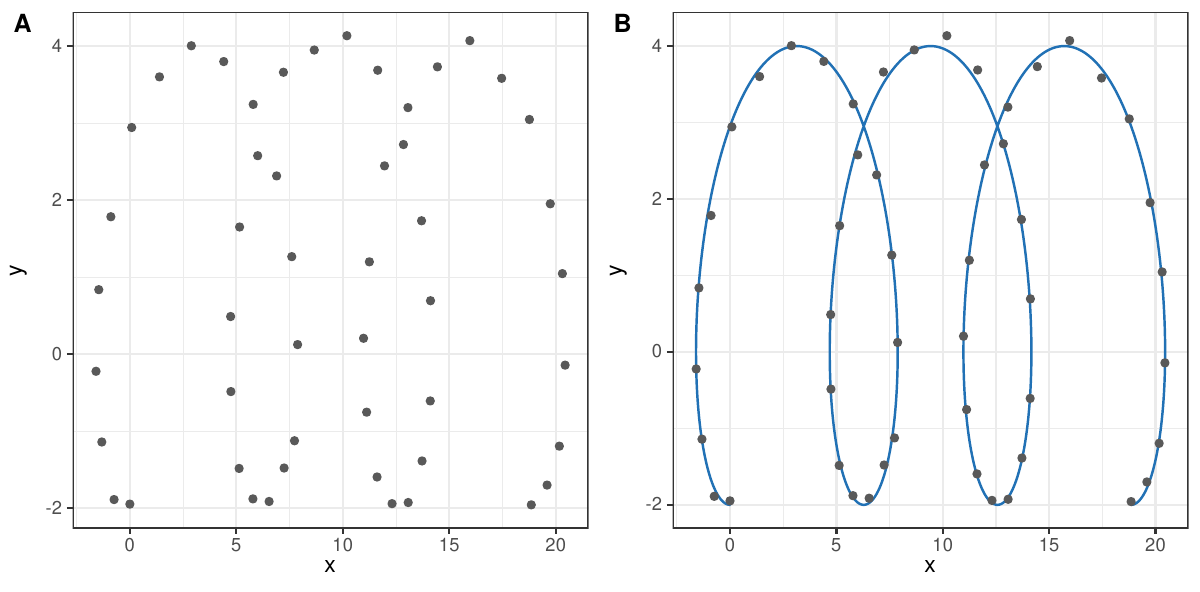}
  \caption{Prolate trochoid data ($n=50$, output noise SD $=0.1$). A: scatter plot of observations. B: SRPO manifold fit (blue solid curve) overlaid on the data.}
  \label{fig:trochoid_case}
\end{figure}

The SRPO fit in Figure~\ref{fig:trochoid_case} accurately represents the repeated branches and looped geometry, returning \((\hat\theta_1,\hat\theta_2,\hat\eta)=(1.000,3.000,18.850)\). 
Namely, $\hat{\psi}_x(t;\hat{\bm\theta}_n)=1.000 t-3.000\sin t$ and $\hat{\psi}_y(t;\hat{\bm\theta}_n)=1.000-3.000\cos t$.
The induced set-valued forward prediction and inverse prediction both attain a mean Hausdorff error \(0.001\).

\subsection{Lissajous Curve}

The Lissajous case tests a periodic multi-branch geometry with repeated overlaps and self-intersections. 
Lissajous curves arise from pairs of orthogonal sinusoidal motions and provide a canonical source of self-intersecting periodic curves \citep{lissajous1857memoire}. 
The noiseless two-parameter structural manifold, with \(\bm\theta=(\theta_1,\theta_2)^\top\), is
\[
\psi_x(t;\bm\theta)=\theta_1\sin(3 t+\pi/2),
~
\psi_y(t;\bm\theta)=\theta_2\sin(4 t).
\]
The latent states are generated on \(t_i\in[0,2\pi]\), the sample size is \(n=100\), and the true two-parameter value is \(\bm\theta^\star=(1,1)\) with \(\eta^\star=2\pi\). 
The SRPO search is 
constrained to \(\theta_1\in[0.1,2]\), \(\theta_2\in[0.1,2]\), and \(\eta\in[\pi,4\pi]\).

\begin{figure}[ht]
  \centering
  \includegraphics[width=0.8\textwidth]{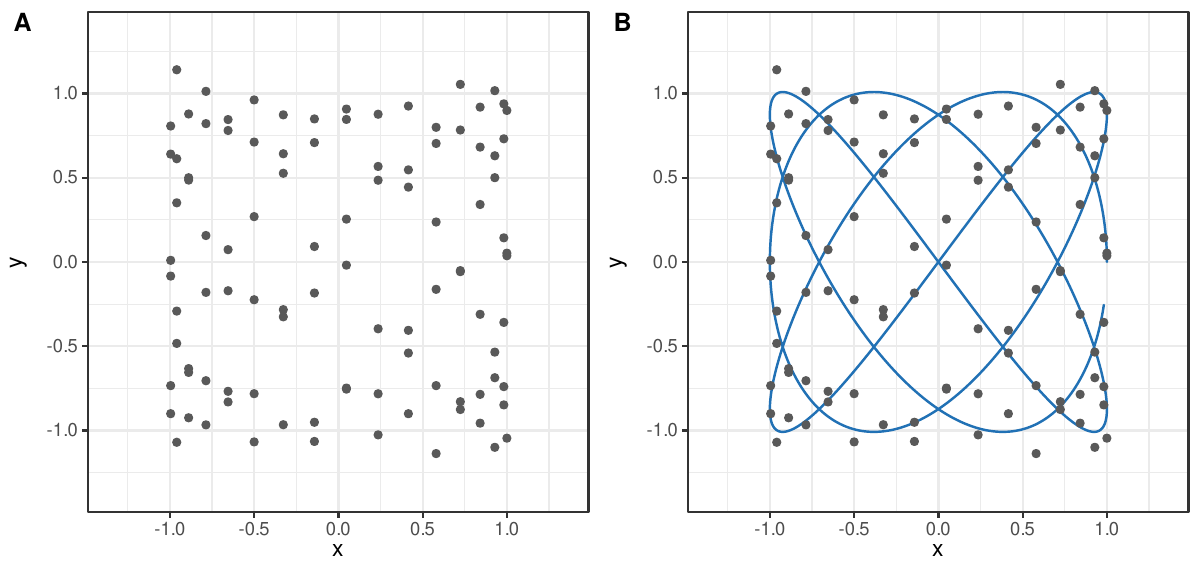}
  \caption{Lissajous curve data ($n=100$, output noise SD $=0.1$). A: scatter plot of observations. B: SRPO manifold fit (blue solid curve) overlaid on the data.}
  \label{fig:lissajous_case}
\end{figure}

The SRPO fit in Figure~\ref{fig:lissajous_case} remains stable under periodic self-intersections, returning \((\hat\theta_1,\hat\theta_2,\hat\eta)=(1.000,1.009,6.219)\). 
Namely, $\hat{\psi}_x(t;\hat{\bm\theta}_n)=1.000\sin(3 t+\pi/2)$ and $\hat{\psi}_y(t;\hat{\bm\theta}_n)=1.009\sin(4 t)$.
The induced set-valued forward prediction and inverse prediction retain multiple admissible branches, attaining mean Hausdorff errors \(0.011\) and \(0.034\), respectively.

Across the three examples, SRPO recovers the latent manifolds from noisy point-valued observations and induces stable set-valued forward prediction and inverse prediction. 
These results show that manifold modeling can recover qualitatively different noto structures and that the induced downstream predictions remain stable.

\section{Conclusion}
\label{sec:conclusion}

This work provides a new statistical formulation of \emph{manifold regression} for non-one-to-one (noto) input-output relations.
In the parametric setting considered here, the manifold and latent mappings have specified functional forms indexed by an unknown finite dimensional parameter.
It treats modeling the latent manifold as the central statistical task and induces set-valued prediction by slicing the learned manifold.
Theoretical guarantees are also stated for manifold modeling and the resulting prediction sets.
It changes the predictive object: instead of forcing a noto relation into a direct regression map, it represents the underlying relation by a latent manifold while preserving the statistical roles of input and output.
This framework also suggests a useful principle: when the target is poorly behaved in the original coordinates, recasting the problem in a suitable latent space may reveal the right statistical object and lead to more principled prediction.

Furthermore, the proposed framework gives a unified view of several familiar predictive tasks. 
When the manifold is $\mathcal{M}=\{(x,f(x)) : x\in\mathcal{X}\}$, manifold prediction reduces to ordinary regression. 
When the object of interest is to find the inputs given a target output, the same estimated manifold induces manifold sliced inverse prediction. 
Thus classical regression and inverse prediction are unified as different projections of the same latent structure.
Consequently, the formulation extends predictive learning from one-to-one relations to noto relations while preserving a coherent framework for modeling and prediction.

Manifold regression also clarifies its relationship with existing paradigms.
Bayesian inverse problems \citep{Stuart2010} usually start from a given forward operator and study recovery of inputs from outputs, whereas the present problem treats the noto input-output relation itself as the object to be learned.
By contrast, the set-valued modeling literature \citep{LiMolchanovMolinariPeng2021} allows set-valued responses, but assumes that the observed response for each input is already a set.
The problem considered here is fundamentally different: observations are standard input-output pairs, while the underlying relation may admit multiple outputs for a given input.
Manifold learning provides related nonparametric tools for geometric representation, while manifold regression focuses on a relation manifold with distinct input and output roles and uses its coordinate slices for prediction.

Several important directions remain primitive. 
One direction is nonparametric manifold regression, where the manifold is estimated directly from the data rather than through a pre-specified parametric family. 
This would link the predictive formulation to principal curves and related manifold modeling methods while preserving the input-output distinction that motivates manifold regression. 
Extensions to higher dimensional settings are also under investigation, including higher dimensional observed spaces and latent manifolds. 
In such settings, the method must accommodate more complex geometric structure, higher dimensional domain volume, and the computational demands of recovering a regression relation from finite samples.
Another direction is uncertainty quantification for manifold regression, including uncertainty in the estimated manifold, the recovered domain volume, and the induced prediction sets. 
Such uncertainty measures would also provide a natural way for studying experimental design efficiency, since new samples could be chosen to reduce uncertainty.

Manifold regression is intended as a unified framework rather than a single-purpose procedure. 
It is hoped that this formulation will motivate further statistical questions and lead to useful tools for modeling, sampling, and prediction in noto input-output problems.

\section{Competing interests}
\label{disclosure-statement}
The authors declare no competing interests.

\section{Data availability statement}
\label{data-availability-statement}
No external data are associated with this article.
The numerical examples and figures can be reproduced using the R code provided in the supplementary materials.

\section{Supplementary material}
\label{supplementary-material}
The supplementary material is organized as follows.

\begin{itemize}
\item \textbf{Spiral data and technical proofs.}
The spiral dataset used in the motivating example and the proofs for all theoretical results are collected in Appendix~\ref{app:supp}.

\item \textbf{Code (.zip).}
Contains R code for implementing the soft-constrained solver for regularized profile optimization (SRPO), manifold prediction, and reproducing the numerical results and figures in the paper.
\end{itemize}

\appendix

\section{Supplementary Material}
\label{app:supp}

This appendix contains material that supports the main manuscript.
It first reports the spiral dataset used in the motivating example,
and then provides technical proofs for all the theoretical results.

\subsection{Spiral Data}

Table~\ref{tab:spiral-data} lists the 100 observations that are displayed in Figure~\ref{fig:spiral_intro} and reused in the spiral modeling and prediction examples in the main manuscript.
The dataset was generated from the Archimedean spiral model \(x=\alpha t\cos(t)\), \(y=\alpha t\sin(t)+\epsilon_y\), \(0\le t\le \eta\), with \(\alpha^\star=0.5\), \(\eta^\star=6\pi+\pi/4\), \(n=100\), and \(\epsilon_y\sim\mathcal N(0,0.3^2)\).

\begin{table}[H]
\centering
\captionsetup{skip=2pt}
\caption{Spiral data used in the motivating example.}
\label{tab:spiral-data}
{\fontsize{10pt}{10pt}\selectfont 
\setlength{\tabcolsep}{2pt}
\renewcommand{\arraystretch}{1} 
\begin{tabular}{@{}rr@{\hspace{1.25em}}rr@{\hspace{1.25em}}rr@{\hspace{1.25em}}rr@{\hspace{1.25em}}rr@{\hspace{1.25em}}rr@{\hspace{1.25em}}rr@{}}
\hline
x & y & x & y & x & y & x & y & x & y & x & y & x & y
\\
\hline
-7.83 & -1.60 & -3.92 & 2.56 & -1.51 & -5.33 & -0.21 & -8.85 & 0.62 & -5.89 & 3.02 & 1.55 & 6.28 & -6.04 \\
-7.83 & 0.55 & -3.90 & -3.28 & -1.47 & 0.65 & -0.19 & 0.73 & 1.11 & -1.81 & 3.05 & -0.36 & 6.30 & 0.28 \\
-7.53 & -2.35 & -3.77 & 6.43 & -1.35 & -1.56 & -0.01 & 0.83 & 1.18 & 3.75 & 3.16 & 0.95 & 6.94 & 7.05 \\
-7.52 & 1.72 & -3.44 & -7.44 & -1.30 & -0.27 & 0.00 & 0.16 & 1.52 & -8.24 & 3.17 & 0.12 & 7.51 & -5.32 \\
-6.92 & 3.43 & -3.34 & 2.55 & -1.11 & 4.14 & 0.10 & -0.30 & 1.60 & -2.13 & 3.22 & -7.50 & 8.09 & 5.06 \\
-6.91 & -4.63 & -3.25 & -4.28 & -1.09 & 0.62 & 0.11 & -2.36 & 1.72 & -5.67 & 3.80 & -4.75 & 8.48 & -3.34 \\
-6.07 & 5.11 & -2.66 & 3.42 & -1.08 & -1.83 & 0.13 & 0.38 & 1.74 & 6.88 & 4.08 & 5.07 & 8.90 & 3.50 \\
-6.00 & -6.10 & -2.45 & -4.52 & -1.02 & 7.47 & 0.18 & 0.12 & 1.81 & 2.92 & 4.68 & -3.72 & 9.12 & -2.46 \\
-5.00 & 5.21 & -2.42 & 6.57 & -0.86 & 0.75 & 0.22 & 0.33 & 2.07 & -1.66 & 4.83 & -8.10 & 9.35 & 1.51 \\
-4.83 & -6.80 & -1.91 & 3.34 & -0.75 & -2.47 & 0.25 & 0.14 & 2.34 & 2.88 & 4.99 & 4.74 & 9.42 & -0.15 \\
-4.74 & -0.21 & -1.88 & -8.32 & -0.63 & 0.67 & 0.27 & -0.34 & 2.48 & -0.94 & 5.40 & -2.68 & & \\
-4.65 & -1.28 & -1.64 & -0.32 & -0.48 & -5.50 & 0.28 & 0.08 & 2.75 & 2.47 & 5.67 & 2.73 & & \\
-4.64 & 0.36 & -1.62 & -0.98 & -0.40 & 0.79 & 0.39 & 7.03 & 2.80 & -4.99 & 5.93 & -1.70 & & \\
-4.37 & -2.69 & -1.59 & 0.13 & -0.34 & -1.82 & 0.46 & 3.59 & 2.81 & -1.13 & 6.11 & 1.98 & & \\
-4.36 & 1.57 & -1.53 & -0.90 & -0.31 & 3.61 & 0.60 & -1.83 & 2.99 & 6.20 & 6.23 & -0.53 & & \\
\hline
\end{tabular}
}
\end{table}

\subsection{Proof of Proposition~\ref{prop:monotonicity}}
\begin{proof}
For each observation \(i\), the inclusion \([0,\eta_1]\subseteq[0,\eta_2]\) gives
\(\mathcal T_{\bm\theta_0,\eta_1}(x_i)\subseteq\mathcal T_{\bm\theta_0,\eta_2}(x_i)\).
Therefore, by the definition of the infimum and the convention that the infimum over an empty feasible set is \(+\infty\),
$
m_{\bm\theta_0,\eta_2}(x_i,y_i)
\le
m_{\bm\theta_0,\eta_1}(x_i,y_i).
$
Averaging over \(i=1,\ldots,n\) gives
\(R_n(\bm\theta_0,\eta_2)\le R_n(\bm\theta_0,\eta_1)\),
which proves (i).

Taking the infimum over \(\bm\theta\in\Theta\) on both sides of the monotonicity in (i) gives
$
\inf_{\bm\theta\in\Theta} R_n(\bm\theta,\eta_2)
\le
\inf_{\bm\theta\in\Theta} R_n(\bm\theta,\eta_1),
$
which proves (ii).

For (iii), suppose that the problem
\(\min_{\bm\theta\in\Theta,\ \eta\in[\eta_{\min},\eta_{\max}]} R_n(\bm\theta,\eta)\)
admits a global minimizer, and let \((\widetilde{\bm\theta},\widetilde\eta)\) be one such minimizer. 
By (i),
\(R_n(\widetilde{\bm\theta},\eta_{\max})\le R_n(\widetilde{\bm\theta},\widetilde\eta)\). Since
\((\widetilde{\bm\theta},\widetilde\eta)\) is already a global minimizer, the reverse inequality also holds:
\(R_n(\widetilde{\bm\theta},\widetilde\eta)\le R_n(\widetilde{\bm\theta},\eta_{\max})\). Hence
\(R_n(\widetilde{\bm\theta},\eta_{\max})=R_n(\widetilde{\bm\theta},\widetilde\eta)\), and therefore
\((\widetilde{\bm\theta},\eta_{\max})\) is also a global minimizer. This proves (iii).
\end{proof}

\subsection{Proof of Lemma~\ref{lem:rpo-uniform-convergence}}

\begin{proof}
By the sampling model, $\operatorname{supp}(T)=[0,\eta^\star]$ with $0<\eta^\star<\infty$.
Continuity of $\psi_y(\cdot;\bm\theta^\star)$ on $[0,\eta^\star]$ yields
$M^\star:=\sup_{t\in[0,\eta^\star]}|\psi_y(t;\bm\theta^\star)|<\infty$.
Independence of $\varepsilon_y$ and $T$, together with $\mathbb E(\varepsilon_y)=0$ and $\mathbb E(\varepsilon_y^2)=\sigma^2$, gives
$
\mathbb E(Y^2)
=
\mathbb E\{\psi_y^2(T;\bm\theta^\star)\}+\sigma^2
\le
(M^\star)^2+\sigma^2
<\infty.
$
Compactness of $\Theta_{\mathcal A}$ and continuity of $\psi_y$ on $[0,\eta_{\max}]\times\Theta_{\mathcal A}$ yield
$M:=\sup_{(t,\bm\theta)\in[0,\eta_{\max}]\times\Theta_{\mathcal A}}|\psi_y(t;\bm\theta)|<\infty$.
By admissibility, for every $(\bm\theta,\eta)\in\mathcal A$,
$
0\le m_{\bm\theta,\eta}(X,Y)
\le
\sup_{0\le t\le\eta_{\max}}\{Y-\psi_y(t;\bm\theta)\}^2
\le
2Y^2+2M^2
$
almost surely, whence $R(\bm\theta,\eta)<\infty$ on $\mathcal A$.

For every $\delta>0$, define
$
\omega_\delta(x,y)
=
\sup\bigl\{
|m_{\bm\theta,\eta}(x,y)-m_{\bm\theta',\eta'}(x,y)|
:
(\bm\theta,\eta),(\bm\theta',\eta')\in\mathcal A,\ 
\|(\bm\theta,\eta)-(\bm\theta',\eta')\|\le\delta
\bigr\}.
$
Continuity on the compact set $\mathcal A$ gives $\omega_\delta(x,y)\to0$ as $\delta\to0$ for almost every $(x,y)$, and
$0\le\omega_\delta(X,Y)\le 2Y^2+2M^2$.
Dominated convergence therefore yields $\mathbb E\{\omega_\delta(X,Y)\}\to0$ as $\delta\to0$.

For every $\delta>0$, compactness of $\mathcal A$ yields a finite $\delta$-net
$\{(\bm\theta_j,\eta_j)\}_{j=1}^{N_\delta}\subset\mathcal A$, that is,
for every $(\bm\theta,\eta)\in\mathcal A$ there exists $j\in\{1,\ldots,N_\delta\}$ such that
$\|(\bm\theta,\eta)-(\bm\theta_j,\eta_j)\|\le\delta$.
Consequently,
$
\sup_{(\bm\theta,\eta)\in\mathcal A}
|R_n(\bm\theta,\eta)-R(\bm\theta,\eta)|
\le
\max_{1\le j\le N_\delta}
|R_n(\bm\theta_j,\eta_j)-R(\bm\theta_j,\eta_j)|
+
n^{-1}\sum_{i=1}^n\omega_\delta(X_i,Y_i)
+
\mathbb E\{\omega_\delta(X,Y)\}.
$
The strong law of large numbers yields
$\max_{1\le j\le N_\delta}|R_n(\bm\theta_j,\eta_j)-R(\bm\theta_j,\eta_j)|\to0$
and
$n^{-1}\sum_{i=1}^n\omega_\delta(X_i,Y_i) \allowbreak \to\mathbb E\{\omega_\delta(X,Y)\}$
almost surely, so
$
\limsup_{n\to\infty}
\sup_{(\bm\theta,\eta)\in\mathcal A}
|R_n(\bm\theta,\eta)-R(\bm\theta,\eta)|
\le
2\mathbb E\{\omega_\delta(X,Y)\}
$
almost surely.
The preceding bound holds for every $\delta>0$; since
$\mathbb E\{\omega_\delta(X,Y)\}\to0$ as $\delta\to0$,
$
\sup_{(\bm\theta,\eta)\in\mathcal A}
|R_n(\bm\theta,\eta)-R(\bm\theta,\eta)|
\to0
$
almost surely.
\end{proof}

\subsection{Proof of Lemma~\ref{lem:rpo-existence}}

\begin{proof}
By the additional lower bound, there exists \(c_0>0\) such that \(R(\bm\theta,\eta)\ge c_0\) for all \((\bm\theta,\eta)\in\mathcal A\).
Lemma~\ref{lem:rpo-uniform-convergence} gives
$
\sup_{(\bm\theta,\eta)\in\mathcal A}
|R_n(\bm\theta,\eta)-R(\bm\theta,\eta)|
\to0
$
almost surely, and hence also in probability.
On the event
$
\sup_{(\bm\theta,\eta)\in\mathcal A}
|R_n(\bm\theta,\eta)-R(\bm\theta,\eta)|
<c_0/2,
$
\(R_n(\bm\theta,\eta)\ge c_0/2\) holds uniformly over \(\mathcal A\).
Thus \(\log R_n\) is finite and continuous on the compact set \(\mathcal A\), and therefore so is
$
J_n(\bm\theta,\eta)
=
\frac12\log R_n(\bm\theta,\eta)
+
\frac{\lambda}{n}\eta.
$
By the Weierstrass theorem, the minimizing set of \(J_n\) over \(\mathcal A\) is nonempty and compact on this event.
Since the event has probability tending to one,
$
\mathbb P\bigl\{\arg\min_{(\bm\theta,\eta)\in\mathcal A}J_n(\bm\theta,\eta)\neq\varnothing\bigr\}\to1.
$
\end{proof}

\subsection{Proof of Theorem~\ref{thm:rpo-consistency}}

\begin{proof}
For each $\bm\theta\in\Theta_{\mathcal A}$, let
\(\mathcal A_{\bm\theta}=\{\eta\in[\eta_{\min},\eta_{\max}]:(\bm\theta,\eta)\in\mathcal A\}\)
denote the admissible slice of latent domain volumes, and define the profiled empirical criterion by
$
J_{n,p}(\bm\theta)
:=
\inf_{\eta\in\mathcal A_{\bm\theta}}
J_n(\bm\theta,\eta).
$
On the event that a global minimizer of $J_n$ over $\mathcal A$ exists,
$
\hat{\bm\theta}_n\in\arg\min_{\bm\theta\in\Theta_{\mathcal A}}
J_{n,p}(\bm\theta).
$

Under the conditions of Lemma~\ref{lem:rpo-existence}, there exists $c_0>0$ such that
$
R(\bm\theta,\eta)\ge c_0
$
for all $(\bm\theta,\eta)\in\mathcal A$. Lemma~\ref{lem:rpo-uniform-convergence} gives
$
\sup_{(\bm\theta,\eta)\in\mathcal A}
|R_n(\bm\theta,\eta)-R(\bm\theta,\eta)|
\to0
$
almost surely, and hence also in probability.
Thus, with probability tending to one,
$
\sup_{(\bm\theta,\eta)\in\mathcal A}
|R_n(\bm\theta,\eta)-R(\bm\theta,\eta)|
<c_0/2,
$
and on this event $R_n(\bm\theta,\eta)\ge c_0/2$ uniformly over $\mathcal A$.
Since $\log(\cdot)$ is Lipschitz on $[c_0/2,\infty)$,
$
\sup_{(\bm\theta,\eta)\in\mathcal A}
\bigl|\frac12\log R_n(\bm\theta,\eta)-\frac12\log R(\bm\theta,\eta)\bigr|
\xrightarrow{p}0.
$
Moreover, because $\lambda>0$ is fixed,
$
\sup_{\mathcal A}(\lambda/n)\eta
\le
\lambda\eta_{\max}/n
\to0.
$
Therefore
$
\sup_{(\bm\theta,\eta)\in\mathcal A}
\bigl|J_n(\bm\theta,\eta)-J(\bm\theta,\eta)\bigr|
\xrightarrow{p}0.
$

Profiling over $\eta$ preserves this uniform convergence: for every
$\bm\theta\in\Theta_{\mathcal A}$,
$
|J_{n,p}(\bm\theta)-J_p(\bm\theta)|
\le
\sup_{\eta\in\mathcal A_{\bm\theta}}
|J_n(\bm\theta,\eta)-J(\bm\theta,\eta)|
\le
\sup_{(\bm\theta,\eta)\in\mathcal A}
|J_n(\bm\theta,\eta)-J(\bm\theta,\eta)|,
$
and hence
$
\sup_{\bm\theta\in\Theta_{\mathcal A}}
|J_{n,p}(\bm\theta)-J_p(\bm\theta)|
\xrightarrow{p}0.
$

It remains to apply the profile identification condition in the theorem.
For any $\epsilon>0$, set
\(B_\epsilon=\{\bm\theta\in\Theta_{\mathcal A}:\|\bm\theta-\bm\theta^\star\|\ge\epsilon\}\).
If $B_\epsilon$ is empty, the claim is immediate. Otherwise, let
$
\Delta_\epsilon
:=
\inf_{\bm\theta\in B_\epsilon}J_p(\bm\theta)
-
J_p(\bm\theta^\star)
$
denote the identification gap, which is positive by assumption.
On the event
$
\sup_{\bm\theta\in\Theta_{\mathcal A}}
|J_{n,p}(\bm\theta)-J_p(\bm\theta)|
<\Delta_\epsilon/3,
$
no minimizer of $J_{n,p}$ can lie in $B_\epsilon$. Indeed, if
$\bm\theta\in B_\epsilon$, then
$
J_{n,p}(\bm\theta)
\ge
J_p(\bm\theta)-\Delta_\epsilon/3
\ge
J_p(\bm\theta^\star)+2\Delta_\epsilon/3
>
J_{n,p}(\bm\theta^\star).
$
Thus, on this event and on the event that the empirical minimizer exists,
$
\|\hat{\bm\theta}_n-\bm\theta^\star\|<\epsilon.
$
Lemma~\ref{lem:rpo-existence} gives existence with probability tending to one, and
the uniform convergence event also has probability tending to one. Hence
$
P(\|\hat{\bm\theta}_n-\bm\theta^\star\|\ge\epsilon)\to0
$
for every $\epsilon>0$, which proves
$
\hat{\bm\theta}_n\xrightarrow{p}\bm\theta^\star.
$
\end{proof}

\subsection{Proof of Proposition~\ref{prop:al-continuity}}

\begin{proof}
For any observation \((x_i,y_i)\), the definition of the soft-profiled residual gives
\[
\widetilde m^+_{\bm\theta,\eta}(x_i,y_i;u_i,\rho)
:=
\inf_{t\in[0,\eta]}
\left[
\{y_i-\psi_y(t;\bm\theta)\}^2
+
u_i\{\psi_x(t;\bm\theta)-x_i\}
+
{\rho}/{2}\{\psi_x(t;\bm\theta)-x_i\}^2
+
{u_i^2}/{2\rho}
\right].
\]
The function inside the infimum is continuous in \((t,\bm\theta,u_i)\) on the common compact set \([0,\eta_{\max}]\times\Theta\times[-U,U]\).
The feasible correspondence \(\eta\mapsto[0,\eta]\) is nonempty, compact-valued, and continuous in the Hausdorff distance on \([\eta_{\min},\eta_{\max}]\).
Therefore, by Berge's maximum theorem \citep[Theorem~17.31]{AliprantisBorder2006}, applied to the minimum problem through the negative of the objective, \(\widetilde m^+_{\bm\theta,\eta}(x_i,y_i;u_i,\rho)\) is continuous in
\((\bm\theta,\eta,u_i)\).

Since \(n\) is fixed, \(\widetilde R_n(\bm\theta,\eta;\bm u,\rho)\) is a finite average of continuous functions and is therefore continuous in \((\bm\theta,\eta,\bm u)\). Moreover, completing the square shows that \(\widetilde m^+_{\bm\theta,\eta}(x_i,y_i;u_i,\rho)\) is the infimum over \(t\in[0,\eta]\) of
$
\{y_i-\psi_y(t;\bm\theta)\}^2
+
{\rho}/{2}
\left\{
\psi_x(t;\bm\theta)-x_i+{u_i}/{\rho}
\right\}^2,
$
and is therefore nonnegative.
Hence \(\widetilde R_n(\bm\theta,\eta;\bm u,\rho)+\delta\ge \delta>0\), so the logarithm
is well defined and continuous.
It follows that \(\widetilde J_n(\bm\theta,\eta;\bm u,\rho)\) is continuous in \((\bm\theta,\eta,\bm u)\) on
\(\Theta\times[\eta_{\min},\eta_{\max}]\times[-U,U]^n\).
\end{proof}

\subsection{Proof of Proposition~\ref{prop:al-root-approx}}

\begin{proof}
Write \(\hat t_i=\hat t_i(\bm\theta,\eta)\) and \(r(t)=r(t;\bm\theta,x_i)\). Since
\(r(\hat t_i)=0\) and \(r'(\hat t_i)=\partial_t\psi_x(\hat t_i;\bm\theta)\neq0\), there exist
\(\delta>0\) and constants \(0<c_1\le c_2<\infty\) such that the interval
\(I_\delta\) with endpoints \(\hat t_i-\delta\) and \(\hat t_i+\delta\) is contained in \((0,\eta)\) and, for all \(t\in I_\delta\),
$
c_1|t-\hat t_i|\le |r(t)|\le c_2|t-\hat t_i|.
$
This follows from the continuity of \(r'\) and the mean-value theorem. Let \(g(t)=\|y_i-\psi_y(t;\bm\theta)\|^2\). Since \(\psi_y\) is continuously differentiable on
\(I_\delta\), \(g\) is Lipschitz on \(I_\delta\). Thus, for some \(L<\infty\),
\(|g(t)-g(\hat t_i)|\le L|t-\hat t_i|\) for all \(t\in I_\delta\).

For any \(U<\infty\) with \(|u_i|\le U\), let \(\widetilde t_i=\widetilde t_i(\bm\theta,\eta;u_i,\rho)\) be any minimizer over \(I_\delta\) of
$
g(t)+u_i r(t)+{\rho}/{2}r^2(t).
$
Since \(r(\hat t_i)=0\), the minimizing property gives
$
g(\widetilde t_i)+u_i r(\widetilde t_i)+{\rho}/{2}r^2(\widetilde t_i)
\le
g(\hat t_i).
$
Therefore,
$
{\rho}/{2}r^2(\widetilde t_i)
\le
g(\hat t_i)-g(\widetilde t_i)-u_i r(\widetilde t_i)
\le
L|\widetilde t_i-\hat t_i|+U|r(\widetilde t_i)|.
$
Using \(|r(\widetilde t_i)|\le c_2|\widetilde t_i-\hat t_i|\) and
\(|r(\widetilde t_i)|\ge c_1|\widetilde t_i-\hat t_i|\), we obtain
$
{\rho}/{2}c_1^2|\widetilde t_i-\hat t_i|^2
\le
(L+Uc_2)|\widetilde t_i-\hat t_i|.
$
If \(\widetilde t_i=\hat t_i\), the desired bound is immediate. Otherwise, dividing both sides
by \(|\widetilde t_i-\hat t_i|\) yields
$
|\widetilde t_i-\hat t_i|
\le
{2(L+Uc_2)}/{\rho c_1^2}.
$
Hence \(|\widetilde t_i(\bm\theta,\eta;u_i,\rho)-\hat t_i(\bm\theta,\eta)|=O(\rho^{-1})\), uniformly for \(|u_i|\le U\). This completes the proof.
\end{proof}

\subsection{Proof of Theorem~\ref{thm:sliced-prediction}}

\begin{proof}
Let \(f(t,\bm\theta)=\psi_x(t;\bm\theta)-x_0\).
Let $t_1^\star,\dots,t_K^\star$ be the latent roots specified in the theorem, so that
$
\psi_x(t_k^\star;\bm\theta^\star)=x_0
$
for $k=1,\dots,K$.

By assumption, $f(t_k^\star,\bm\theta^\star)=0$ and
\(\partial_t f(t_k^\star,\bm\theta^\star)=\partial_t\psi_x(t_k^\star;\bm\theta^\star)\neq0\)
for $k=1,\dots,K$. There exists $\gamma>0$ such that the intervals
$
I_k=[t_k^\star-\gamma,t_k^\star+\gamma],
~ k=1,\dots,K,
$
are pairwise disjoint and contained in $(0,\eta^\star)$, and such that
$
t_K^\star+\gamma<\eta^\star-\gamma
~\text{and}~
\gamma<\delta.
$

By the implicit function theorem, there exists a neighborhood of
$\bm\theta^\star$ on which, for each $k$, there is a unique continuous root branch
$t_k(\bm\theta)\in I_k$ satisfying
$
f(t_k(\bm\theta),\bm\theta)=0
$
and $t_k(\bm\theta^\star)=t_k^\star$.

It remains to rule out extra roots. Let
$
C
=
[0,\eta^\star+\delta]
\setminus
\bigcup_{k=1}^K(t_k^\star-\gamma/2,t_k^\star+\gamma/2)
$
be the complement of the root neighborhoods within \([0,\eta^\star+\delta]\).
Since the equation $f(t,\bm\theta^\star)=0$ has no solutions on $C$,
compactness gives
$
\inf_{t\in C}|f(t,\bm\theta^\star)|>0.
$
By continuity of $f$, there exists a (possibly smaller) neighborhood of
$\bm\theta^\star$ such that
$
f(t,\bm\theta)\neq0
$
for all $\bm\theta$ in this neighborhood and all $t\in C$. Hence, every root
of $f(t,\bm\theta)=0$ in $[0,\eta^\star+\delta]$ lies in one of the intervals
$I_1,\dots,I_K$, and there is exactly one such root in each interval.

Since $\hat{\bm\theta}_n\xrightarrow{p}\bm\theta^\star$ and
$\hat\eta_n\xrightarrow{p}\eta^\star$, with probability tending to one,
$\hat{\bm\theta}_n$ belongs to the above neighborhood and
$
|\hat\eta_n-\eta^\star|<\gamma.
$
On this event, $\hat\eta_n<\eta^\star+\delta$ and
$
\hat\eta_n>\eta^\star-\gamma>t_K^\star+\gamma.
$
Therefore all intervals $I_k$ are contained in $[0,\hat\eta_n]$, and the root
set of $\psi_x(t;\hat{\bm\theta}_n)=x_0$ over $[0,\hat\eta_n]$ is exactly
$
\{t_1(\hat{\bm\theta}_n),\dots,t_K(\hat{\bm\theta}_n)\}.
$

Thus, on an event whose probability tends to one, \(\mathcal F_{\hat{\bm\theta}_n,\hat\eta_n}(x_0)\) and \(\mathcal F_{\bm\theta^\star,\eta^\star}(x_0)\) reduce respectively to
\[
\left\{
\psi_y(t_k(\hat{\bm\theta}_n);\hat{\bm\theta}_n):k=1,\dots,K
\right\},
~
\left\{
\psi_y(t_k^\star;\bm\theta^\star):k=1,\dots,K
\right\}.
\]

Because $t_k(\bm\theta)$ is continuous at $\bm\theta^\star$ and $\psi_y$ is
continuous,
$$
\max_{1\le k\le K}
\left\|
\psi_y(t_k(\hat{\bm\theta}_n);\hat{\bm\theta}_n)
-
\psi_y(t_k^\star;\bm\theta^\star)
\right\|
\xrightarrow{p}0.
$$
The Hausdorff distance between the two finite sets is bounded above by this
maximum branchwise difference. Hence
$
d_H\{\mathcal F_{\hat{\bm\theta}_n,\hat\eta_n}(x_0),
\mathcal F_{\bm\theta^\star,\eta^\star}(x_0)\}
\xrightarrow{p}0.
$
\end{proof}

\subsection{Proof of Proposition~\ref{prop:eta-convergence}}

\begin{proof}
Let \(M_n=\max_{1\le i\le n}T_i\).
Let \(T\) denote a random variable with the common distribution of \(T_1,\ldots,T_n\).
Since the support of \(T\) is contained in \([0,\eta^\star]\), we have \(M_n\le \eta^\star\) almost surely.
Hence, for any \(\epsilon>0\), \(P(M_n>\eta^\star+\epsilon)=0\).
It remains to control the lower tail.
Because \([0,\eta^\star]\) is the support of \(T\), every left neighborhood of \(\eta^\star\) has positive probability. Therefore \(P(T>\eta^\star-\epsilon)>0\).
By independence,
\(P(M_n\le \eta^\star-\epsilon)=P(T\le \eta^\star-\epsilon)^n\).
Since \(P(T>\eta^\star-\epsilon)>0\), we have
\(P(T\le \eta^\star-\epsilon)<1\),
and therefore \(P(T\le \eta^\star-\epsilon)^n\to0\).
Combining the two bounds gives
$
P(|M_n-\eta^\star|>\epsilon)
\le
P(M_n\le\eta^\star-\epsilon)+P(M_n>\eta^\star+\epsilon)
\to0.
$
Thus \(M_n\xrightarrow{p}\eta^\star\).
\end{proof}

\subsection{Proof of Corollary~\ref{cor:sliced-inverse-prediction}}

\begin{proof}
The proof is analogous to that of Theorem~\ref{thm:sliced-prediction}, with the roles of
\(\psi_x\) and \(\psi_y\) interchanged. Let \(g(t,\bm\theta)=\psi_y(t;\bm\theta)-y_0\). By assumption,
\(g(s^\star_\ell,\bm\theta^\star)=0\) and
\(\partial_t g(s^\star_\ell,\bm\theta^\star)=\partial_t\psi_y(s^\star_\ell;\bm\theta^\star)\neq0\)
for \(\ell=1,\ldots,L\).

There exists \(\gamma>0\) such that the intervals
\(I_\ell=[s^\star_\ell-\gamma,s^\star_\ell+\gamma]\), \(\ell=1,\ldots,L\), are pairwise disjoint and
contained in \((0,\eta^\star)\). By the implicit function theorem, there exists a neighborhood
of \(\bm\theta^\star\) on which, for each \(\ell\), there is a unique continuous
root branch \(s_\ell(\bm\theta)\in I_\ell\) satisfying
\(g(s_\ell(\bm\theta),\bm\theta)=0\) and \(s_\ell(\bm\theta^\star)=s^\star_\ell\).

It remains to rule out extra roots. Let
$
\mathcal C
=
[0,\eta^\star+\delta]\setminus\bigcup_{\ell=1}^L(s^\star_\ell-\gamma/2,s^\star_\ell+\gamma/2)
$
be the complement of the root neighborhoods within \([0,\eta^\star+\delta]\).
Since \(g(t,\bm\theta^\star)\neq0\) on the compact set \(\mathcal C\), continuity gives
\(\inf_{t\in \mathcal C}|g(t,\bm\theta^\star)|>0\). Hence there exists a (possibly smaller) neighborhood
of \(\bm\theta^\star\) such that \(g(t,\bm\theta)\neq0\) for all \(t\in \mathcal C\) and all \(\bm\theta\) in this
neighborhood. Therefore all roots of \(\psi_y(t;\bm\theta)=y_0\) on
\([0,\eta^\star+\delta]\) lie in \(I_1,\ldots,I_L\), with exactly one root in each interval.

Since \(\hat{\bm\theta}_n\xrightarrow{p}\bm\theta^\star\) and
\(\hat\eta_n\xrightarrow{p}\eta^\star\), with probability tending to one,
\(\hat{\bm\theta}_n\) belongs to the above neighborhood and \([0,\hat\eta_n]\) contains all
the intervals \(I_1,\ldots,I_L\) while remaining inside \([0,\eta^\star+\delta]\). On this
event, \(\mathcal G_{\hat{\bm\theta}_n,\hat\eta_n}(y_0)\) and \(\mathcal G_{\bm\theta^\star,\eta^\star}(y_0)\) reduce respectively to
$
\{\psi_x(s_\ell(\hat{\bm\theta}_n);\hat{\bm\theta}_n):\ell=1,\ldots,L\},
~
\{\psi_x(s^\star_\ell;\bm\theta^\star):\ell=1,\ldots,L\}.
$
Because \(s_\ell(\bm\theta)\) is continuous at \(\bm\theta^\star\) and \(\psi_x\) is continuous,
$
\max_{1\le \ell\le L}
\|\psi_x(s_\ell(\hat{\bm\theta}_n);\hat{\bm\theta}_n)
-
\psi_x(s^\star_\ell;\bm\theta^\star)\|\allowbreak
\xrightarrow{p}0.
$
The Hausdorff distance between the two finite sets is bounded above by this maximum
branchwise difference. Hence
\[
d_H\{\mathcal G_{\hat{\bm\theta}_n,\hat\eta_n}(y_0),\mathcal G_{\bm\theta^\star,\eta^\star}(y_0)\}
\xrightarrow{p}0.
\]
\end{proof}

\bibliographystyle{abbrvnat}
\bibliography{refs}

@book{Bertsekas1999,
  author    = {Bertsekas, Dimitri P.},
  title     = {Nonlinear Programming},
  edition   = {2},
  publisher = {Athena Scientific},
  address   = {Belmont, MA},
  year      = {1999}
}

@article{wang2025augmentation,
  author  = {Wang, Shulei},
  title   = {Augmentation Invariant Manifold Learning},
  journal = {Journal of the Royal Statistical Society Series B:
             Statistical Methodology},
  year    = {2025},
  volume  = {87},
  number  = {4},
  pages   = {978--1000},
  doi     = {10.1093/jrsssb/qkaf003}
}

@article{dunson2026inferring,
  author  = {Dunson, David B. and Wu, Nan},
  title   = {Inferring Manifolds Using Gaussian Processes},
  journal = {Biometrika},
  year    = {2026},
  volume  = {113},
  number  = {2},
  eid     = {asag011},
  doi     = {10.1093/biomet/asag011}
}

@article{KeglKrzyzakLinderZeger2000,
  author  = {K{\'e}gl, Bal{\'a}zs and Krzy{\.z}ak, Adam and Linder, Tam{\'a}s and Zeger, Kenneth},
  title   = {Learning and Design of Principal Curves},
  journal = {IEEE Transactions on Pattern Analysis and Machine Intelligence},
  year    = {2000},
  volume  = {22},
  number  = {3},
  pages   = {281--297},
  doi     = {10.1109/34.841759}
}

@book{AliprantisBorder2006,
  author    = {Aliprantis, Charalambos D. and Border, Kim C.},
  title     = {Infinite Dimensional Analysis: A Hitchhiker's Guide},
  edition   = {3},
  publisher = {Springer},
  address   = {Berlin},
  year      = {2006},
  doi       = {10.1007/3-540-29587-9}
}

@article{HastieStuetzle1989,
  author  = {Hastie, Trevor and Stuetzle, Werner},
  title   = {Principal Curves},
  journal = {Journal of the American Statistical Association},
  volume  = {84},
  number  = {406},
  pages   = {502--516},
  year    = {1989},
  doi     = {https://doi.org/10.1080/01621459.1989.10478797}
}

@Manual{R-optim,
  title        = {R: A Language and Environment for Statistical Computing},
  author       = {{R Core Team}},
  organization = {R Foundation for Statistical Computing},
  address      = {Vienna, Austria},
  year         = {2026},
  url          = {https://www.R-project.org/}
}

@Manual{R-nloptr,
  title  = {nloptr: R Interface to NLopt},
  author = {Ypma, Jelmer and Johnson, Steven G. and Stamm, Aymeric and Borchers, Hans W. and Eddelbuettel, Dirk and Ripley, Brian and Hornik, Kurt and Chiquet, Julien and Adler, Avraham and Dai, Xiongtao and Ooms, Jeroen and Kalibera, Tomas and Jagan, Mikael},
  year   = {2025},
  note   = {R package version 2.2.1},
  url    = {https://CRAN.R-project.org/package=nloptr}
}

@article{LiMolchanovMolinariPeng2021,
  author  = {Li, Qiyu and Molchanov, Ilya and Molinari, Francesca and Peng, Sida},
  title   = {Local Regression Smoothers with Set-Valued Outcome Data},
  journal = {International Journal of Approximate Reasoning},
  volume  = {128},
  pages   = {129--150},
  year    = {2021},
  doi     = {https://doi.org/10.1016/j.ijar.2020.10.005}
}

@article{lissajous1857memoire,
  title   = {M{\'e}moire sur l'{\'e}tude optique des mouvements vibratoires},
  author  = {Lissajous, Jules Antoine},
  journal = {Annales de Chimie et de Physique},
  series  = {3},
  volume  = {51},
  pages   = {147--231},
  year    = {1857}
}

@book{mayergoyz2003mathematical,
  title     = {Mathematical Models of Hysteresis and Their Applications},
  author    = {Mayergoyz, Isaak D.},
  year      = {2003},
  publisher = {Elsevier},
  address   = {Amsterdam}
}

@book{molchanov2005random,
  title     = {Theory of Random Sets},
  author    = {Molchanov, Ilya},
  year      = {2005},
  publisher = {Springer},
  address   = {London}
}

@article{murphy2000profile,
  author  = {Murphy, Susan A. and van der Vaart, Aad W.},
  title   = {On Profile Likelihood},
  journal = {Journal of the American Statistical Association},
  year    = {2000},
  volume  = {95},
  number  = {450},
  pages   = {449--465},
  doi     = {10.1080/01621459.2000.10474219}
}

@book{NocedalWright2006,
  author    = {Nocedal, Jorge and Wright, Stephen J.},
  title     = {Numerical Optimization},
  edition   = {2},
  series    = {Springer Series in Operations Research and Financial Engineering},
  publisher = {Springer},
  address   = {New York},
  year      = {2006},
  doi       = {10.1007/978-0-387-40065-5}
}

@article{Stuart2010,
  author  = {Stuart, A. M.},
  title   = {Inverse Problems: A {B}ayesian Perspective},
  journal = {Acta Numerica},
  volume  = {19},
  pages   = {451--559},
  year    = {2010},
  doi     = {https://doi.org/10.1017/S0962492910000061}
}

@Manual{Soetaert2009RootSolve,
  author = {Soetaert, Karline},
  title  = {{rootSolve}: Nonlinear root finding, equilibrium and steady-state analysis of ordinary differential equations},
  year   = {2009},
  note   = {R package version 1.6},
  url    = {https://CRAN.R-project.org/package=rootSolve}
}

@book{vandervaart1998asymptotic,
  title     = {Asymptotic Statistics},
  author    = {van der Vaart, Aad W.},
  year      = {1998},
  publisher = {Cambridge University Press},
  address   = {Cambridge}
}

@misc{weisstein2026trochoid,
  title        = {Trochoid},
  author       = {Weisstein, Eric W.},
  howpublished = {From MathWorld--A Wolfram Resource},
  year         = {2026},
  note         = {Available at \url{https://mathworld.wolfram.com/Trochoid.html}}
}

@book{yates1952handbook,
  title     = {A Handbook on Curves and Their Properties},
  author    = {Yates, Robert C.},
  year      = {1952},
  publisher = {J. W. Edwards},
  address   = {Ann Arbor, MI}
}

@book{spong2005robot,
  title={Robot Modeling and Control},
  author={Spong, Mark W. and Hutchinson, Seth and Vidyasagar, M.},
  year={2005},
  publisher={Wiley},
  address={Hoboken, NJ}
}

@inproceedings{zhu2017toward,
  title={Toward Multimodal Image-to-Image Translation},
  author={Zhu, Jun-Yan and Zhang, Richard and Pathak, Deepak and Darrell, Trevor and Efros, Alexei A. and Wang, Oliver and Shechtman, Eli},
  booktitle={Advances in Neural Information Processing Systems},
  volume={30},
  year={2017}
}

@inproceedings{rombach2022high,
  title={High-Resolution Image Synthesis with Latent Diffusion Models},
  author={Rombach, Robin and Blattmann, Andreas and Lorenz, Dominik and Esser, Patrick and Ommer, Bj{\"o}rn},
  booktitle={Proceedings of the IEEE/CVF Conference on Computer Vision and Pattern Recognition},
  pages={10684--10695},
  year={2022}
}

\end{document}